\newcommand\independent{\protect\mathpalette{\protect\independenT}{\perp}}
\def\independenT#1#2{\mathrel{\rlap{$#1#2$}\mkern2mu{#1#2}}}
\DeclareMathAlphabet{\mathsfit}{\encodingdefault}{\sfdefault}{m}{sl}
\bmdefine\aalpha{\alpha}
\bmdefine\ddelta{\delta}
\bmdefine\ttheta{\theta}
\bmdefine\mmu{\mu}
\bmdefine\pphi{\psi}
\bmdefine\ppsi{\psi}
\bmdefine\uupsilon{\upsilon}
\newtheorem{corollary}{Corollary}
\newtheorem{theorem}{Theorem}
\newtheorem{lemma}{Lemma}
\newcommand{\openbox}{\leavevmode
  \hbox to.77778em{%
  \hfil\vrule
  \vbox to.675em{\hrule width.6em\vfil\hrule}%
  \vrule\hfil}}
\newcommand{\proofname}{Proof.}
\newcounter{proof}%
\newenvironment{proof}[1][\proofname]{
  \th@nonumberplain
  \def\theorem@headerfont{\itshape}%
  \normalfont
  \@thm{proof}{proof}{#1}}%
  {\@endtheorem}
\definecolor{lightgrey}{rgb}{0.9,0.9,0.9}
\definecolor{darkgreen}{rgb}{0,0.6,0}
\title{A parallel algorithm for ridge  penalized learning of the multivariate exponential family from data of mixed types}
\author{
{\small
\textbf{Diederik S. Laman Trip}$^{1}$, \small
\textbf{Wessel N. van Wieringen}$^{2,3,}$\footnote{Correspondence to: w.vanwieringen@amsterdamumc.nl}}
\\
{\small $^1$ Department of Bionanoscience, Kavli Institute of Nanoscience, Delft University of Technology}
\\
{\small Van der Maasweg 9, 2629 HZ Delft, The Netherlands}
\\
{\small $^2$ Department of Epidemiology and Data Science, Amsterdam UMC}
\\
{\small P.O. Box 7057, 1007 MB Amsterdam, The Netherlands}
\\
{\small $^3$ Department of Mathematics, Vrije Universiteit Amsterdam}
\\
{\small De Boelelaan 1111, 1081 HV Amsterdam, The Netherlands}
}
\begin{document}
\maketitle
% \date{}

\begin{abstract}
Computational efficient evaluation of penalized estimators of multivariate exponential family distributions is sought. These distributions encompass among others Markov random fields with variates of mixed type (e.g. binary and continuous) as special case of interest. The model parameter is estimated by maximization of the pseudo-likelihood augmented with a convex penalty. The estimator is shown to be consistent. With a world of multi-core computers in mind, a computationally efficient parallel Newton-Raphson algorithm is presented for numerical evaluation of the estimator alongside conditions for its convergence. Parallelization comprises the division of the parameter vector into subvectors that are estimated simultaneously and subsequently aggregated to form an estimate of the original parameter. This approach may also enable efficient numerical evaluation of other high-dimensional estimators. The performance of the proposed estimator and algorithm are evaluated and compared in a simulation study. Finally, the paper concludes with an illustration of the presented methodology in the reconstruction of the conditional independence network from data of an integrative omics study.
\\[3mm]
\textbf{Keywords:} Markov random field; Consistency; Pseudo-likelihood; Block coordinate Newton-Raphson; Networks; Parallel algorithms; Graphical models.
\end{abstract}

\section{Introduction}
With the increasing capacity for simultaneous measurement of an individual's many traits, networks have become an omnipresent visualization tool to display the cohesion among these traits. For instance, the cellular regulatory network portraits the interactions among molecules like mRNAs and/or proteins. Statistically, a network captures the relationships among variates implied by a joint probability distribution describing the simultaneous random behavior of the variates. These variates may be of different type, representing -- for example -- traits with continuous, count, or binary state spaces. Generally, the relationship network is unknown and is to be reconstructed from data. To this end we present methodology that learns the network from data with variates of mixed types in a computationally efficient manner.

A collection of $p$ variates of mixed type is mostly modeled by a pairwise Markov random field (MRF) distribution (a special case of the multivariate exponential family). A Markov random field is a set of random variables $Y_1, \ldots, Y_p$ that satisfies certain conditional independence properties specified by an undirected graph. This is made more precise by introduction of the relevant notions. A graph is a pair $\mathcal{G} = (\mathcal{V},\mathcal{E})$ with a finite set of vertices or nodes $\mathcal{V}$ and a collection of edges $\mathcal{E} \subseteq \mathcal{V} \times \mathcal{V}$ that join node pairs. In an undirected graph any edge is undirected, i.e. $(v_1, v_2) \in \mathcal{E}$ is an unordered pair implying that $(v_2, v_1) \in \mathcal{E}$. A subgraph $\mathcal{G}' \subseteq \mathcal{G}$ with $\mathcal{V}' \subseteq \mathcal{V}$ and $\mathcal{E}' \subseteq \mathcal{E}$ is a clique if $\mathcal{G}'$ is complete, i.e. all nodes are directly connected to all other nodes. The neighborhood of a node $v \in \mathcal{V}$, denoted $N(v)$, is the collection of nodes in $\mathcal{V}$ that are adjacent to $v$:  $N(v) = \{v' \in \mathcal{V} \, | \,  (v ,v') \in \mathcal{E}, v\not= v'  \}$. The closed neighborhood is simply $v \cup N(v)$ and denoted by $N[v]$. Now let $\mathbf{Y}$ be a $p$-dimensional random vector. Represent each variate of $\mathbf{Y}$ with a node in a graph $\mathcal{G}$ with $\mathcal{V} = \{1, \ldots. p\}$. Node names thus index the elements of $\mathbf{Y}$. Let $\mathcal{A}$, $\mathcal{B}$ and $\mathcal{C}$ be exhaustive and mutually exclusive subsets of $\mathcal{V} = \{1, \ldots. p\}$. Define the random vectors $\mathbf{Y}_a$, $\mathbf{Y}_b$ and $\mathbf{Y}_c$ by restricting the $p$-dimensional random vector $\mathbf{Y}$ to the elements of $\mathcal{A}$, $\mathcal{B}$ and $\mathcal{C}$, respectively. Then $\mathbf{Y}_a$ and $\mathbf{Y}_b$ are conditionally independent given random vector $\mathbf{Y}_c$, written as $\mathbf{Y}_a \independent \mathbf{Y}_b \, | \, \mathbf{Y}_c$, if and only if their joint probability distribution factorizes as $P(\mathbf{Y}_a, \mathbf{Y}_b \, | \, \mathbf{Y}_c) = P(\mathbf{Y}_a \, | \, \mathbf{Y}_c) \cdot P(\mathbf{Y}_b \, | \, \mathbf{Y}_c)$. The random vector $\mathbf{Y}$ satisfies the local Markov property with respect to a graph $\mathcal{G} = (\mathcal{V},\mathcal{E})$ if $Y_j \independent \mathbf{Y}_{\mathcal{V} \setminus N[j]} \, | \, \mathbf{Y}_{N(j)}$ for all $j \in \mathcal{V}$. Graphically, conditioning on the neighbors of $j$ detaches $j$ from $\mathcal{V} \setminus N[j]$. A Markov Random Field (or undirected graphical model) is a pair $(\mathcal{G}, \mathbf{Y})$ consisting of an undirected graph $\mathcal{G} = (\mathcal{V}, \mathcal{E})$ with associated random variables $\mathbf{Y} = \{Y_j \}_{j \in \mathcal{V}}$ that satisfy the local Markov property with respect to $\mathcal{G}$ (cf. \citealp{Lauritzen1996}). For strictly positive probability distributions of $\mathbf{Y}$ and by virtue of the Hammersley-Clifford theorem (\citealp{Hammersley1971}) the local Markov property may be assessed through the factorization of the distribution in terms of clique functions, i.e. functions of variates that correspond to a clique's nodes of the associated graph $\mathcal{G}$.

In this work we restrict ourselves to cliques of size at most two. Thus, only pairwise interactions between the variates of $\mathbf{Y}$ are considered. Although restrictive, many higher-order interactions can be approximated by pairwise interactions (confer, e.g., \citealp{Gallagher2011}). Under the restriction to pairwise interactions and the assumption of a strictly positive distribution, the probability distribution can be written as: 
\begin{eqnarray} \label{intro_mrf_simple}
P(\mathbf{Y}) & = & \exp \big[ - \mbox{$\sum_{j, j' \in \mathcal{V}}$} \, \phi_{j, j'}(Y_{j}, Y_{j'}) - D \big],
\end{eqnarray}
with log-normalizing constant or log-partition function $D$ and pairwise log-clique functions $\{\phi_{j, j'} \}_{j,j' \in \mathcal{V}}$. The pairwise MRF distribution $P(\mathbf{Y})$, and therefore the graphical structure, is fully known once the log-clique functions are specified. In particular, nodes $j, j' \in \mathcal{V}$ are connected by an edge whenever $\phi_{j, j'} \neq 0$ as the probability distribution of $\mathbf{Y}$ would then not factorize in terms of the variates, $Y_{j}$ and $Y_{j'}$, constituting this clique.

The estimation of the strictly positive MRF distribution (\ref{intro_mrf_simple}) with pairwise interactions will be studied here. This is hampered by the complexity of the log-partition function. Although analytically known for, e.g. the multivariate normal distribution, it is -- in general -- computationally not feasible to evaluate. Indeed, the partition function is computationally intractable for MRFs that have variables with a finite state space (\citealp{Welsh1993}, \citealp{Tibshirani2009}), or more generally for MRFs with variables of mixed type (\citealp{LeeHastie2013}). In effect, maximum likelihood estimation is prohibited computation. Instead parameters will estimation by means of pseudo-likelihood estimation. In particular, as the number of parameters is often of the same order -- if not larger -- as the sample size, the pseudo-likelihood will be augmented with a penalty. An overview of related work, which concentrates mainly on $\ell_1$-penalization, is supplied in the subsection at the end of the Introduction.

The contribution of this work to existing literature is three-fold. First, we present machinery for estimation of the mixed variate graphical model with a quadratic, i.e. ridge or $\ell_2$, penalty. Our motivation for ridge penalized estimation is multifold. For starters, it simply fills the lacuna in the literature which is mainly orientated towards lasso penalization. But it also deserves a place in the literature as, in contrast to lasso estimators, it yields -- due to its strict convex penalty -- a unique estimator. Related to the convexity issue, algorithms for the evaluation of lasso-type estimator often exhibit convergence problems, in particular when either the model is not extremely sparse, the data are medium to high-dimensional, and/or the penalization is not severe. Due to the smoothness and strict convexness of the ridge penalty, either an analytic expression of the estimator exists or a stable algorithm for the evaluation the corresponding estimator is conceivable. Moreover, biostatistical folkore has it that ridge estimators generally yield a better fit than those of lasso-type. This we have seen in the graphical model context, see e.g. \cite{van2016ridge,miok2017ridge,bilgrau2020targeted}. The simulations presented in those works also show that generally the performance of ridge estimation combined with posthoc selection compares well -- if not better -- to those of lasso estimation. Moreover, the smoothness and strict convexity of the ridge penalty can be used to allow other penalties (as was first pointed by \citealp{fan2001variable}). Within the graphical model context we have used this ourselves to derive a generalized lasso/elastic net estimator of the precision matrix of a Gaussian graphical model (see \citealp{van2019generalized}). Finally, the use of an ridge penalty translates the existing lasso-associated framework of sparse network estimation by allowing for estimation of dense networks. The need for such procedures can be found in the fact that the dominant paradigm of sparsity is not necessarily valid in all fields of application. In particular, in molecular biology this paradigm has recently come under fire (\citealp{Boyle2017}), and more dense (graphical) structures are advocated. Moreover, the severe degree of sparsity required by the theoretical resulting underpinning $\ell_1$ estimation procedures may lead to a too stringent inference on the underlying graph, potentially oversimplifying. Therefore, an $\ell_2$ estimation procedure might be more appropriate, especially in molecular biology applications, as it makes no sparsity assumption.

Another contribution is to be found in the efficient algorithm for the evaluation of the presented estimator. This exploits the high degree of parallelization allowed by modern computing systems. We developed a Newton-Raphson procedure that uses \textit{full} (instead of partial) second-order information with comparable computational complexity to existing methods that use only limited second-order information. Our approach translates to other high-dimensional estimators that may profit in their numerical evaluation.

Thirdly, the present work offers a software implementation to learn graphical models from data of more than two different variable types. A more pratical but no less relevant contribution as within the medical and biological field more and more different types of traits of samples are measured. This can be witnessed from the TCGA (The Cancer Genome Atlas) repository, where many types of molecular traits of cancer samples are measured. In a current development the aforementioned molecular information is augmented with imaging data (referred to as radiomics (\citealp{Gillies2015}). Not only more data, but also possibly of a different type. On top of this, efforts also aim to further augment these data with a quantification of a sample's `exposome', i.e. its environmental exposure (\citealp{Wild2012}). Hence, there is a need for methods and implementations that can deal with data comprising more than two types.

The paper is structured as follows. First, Section 2 the Markov random field distribution for variates of mixed types, along with parameter constraints that ensure its well-definedness, is recapped from literature as a special case of the more general exponential family. In Section 3 the exponential family model parameters, and thereby that of the MRF, are estimated by convex penalized pseudo-likelihood maximization, which is shown to yield a consistent estimator. The estimator is numerically evaluated by a form of the Newton-Raphson algorithm (Section 4). This algorithm is parallelized to exploit the multi-core capabilities of modern computing systems, and conditions that ensure convergence of the algorithm are identified. Section 5 presents \textit{a)} an \textit{in silico} comparison of the estimator to related ones and \textit{b)} a simulation study into the computational performance of the algorithm. This work concludes with a demonstration of the estimation procedure by re-analyzing data from an integrative omics study.

\subsection{Related work}
There is extensive literature on (learning) graphical models. Here we point to the sources relevant for the remainder. Attention originally focussed on a set of variates with a continuous state space following a multivariate normal distribution and thus giving rise to the Gaussian graphical model. The penalized estimation approaches of this model either maximize the likelihood augmented with a penalty (be it $\ell_1$ or $\ell_2$, (\citealp{Banerjee2008}, \citealp{Friedman2008}, \citealp{van2016ridge}), or comprise node-wise penalized regressions (\citealp{Meinshausen2006}, \citealp{Ravikumar2011}). The Ising model is the classic case for graphical models with discrete variables. Its estimation is prohibited by the computationally intractability of its partition function, which is generally the case for distributions of variates with a finite state space (e.g. Bernoulli). This may be circumvented by a stringent sparsity assumption (\citealp{Tibshirani2009}, \citealp{Lee2007efficient}). Alternatively, an approximation to the partition function gives rise to approaches that amount to node-wise regressions (\citealp{Ravikumar2010}, \citealp{wainwright2007high}, \citealp{Friedman2010}). This approach is extended to node-conditional distributions of multinomial or Poisson distributed variates (\citealp{jalali2011learning}, \citealp{allen2013local}). A final alternative would be to maximize the penalized pseudo-loglikelihood (\citealp{Tibshirani2009}, building on \citealp{besag1974spatial}), which has the advantage over the node-wise regression of producing a single parameter estimate for the whole graphical structure instead of multiple (possible contradictory) local ones. Moreover, reported simulations indicate that the resulting estimate is close to its penalized maximum likelihood counterpart (\citealp{Tibshirani2009}). Originating in \cite{besag1974spatial}, work on graphical models of mixed type focusses on the derivation the MRF distribution by the specification of the node-conditional distributions. \cite{Lauritzen1996} and  \cite{Lauritzen1989} produce the first examples for combinations of Bernoulli and Gaussian variates. Recently, \cite{yang2012graphical} extended this work and specified the node-conditional distribution of each variate as an univariate exponential family member, from which a corresponding joint MRF distribution was derived. These univariate exponential families include multi-parameter members with all but one parameter constant (e.g. Gaussian variables with constant variance). Subsequently, \cite{yang2012graphical} estimate the MRF parameter by a limited information approach exploiting the node-conditional distributions. Later \cite{yang2014mixed} generalized their work on MRF distributions to allow variates of mixed type with node-conditional distribution of any univariate exponential family member. An maximum $\ell_1$ penalized pseudo-likelihood estimator for this model, limited to a combination of Bernoulli and Gaussian variates, was put forward by \cite{LeeHastie2013}. In parallel effort, \cite{Chen2015} derived a similar joint MRF distribution that also includes multi-parameter exponential family members (e.g. Gaussian variables with unknown mean and variance), and specify conditions on its parameter space. However, after model specification, both \cite{yang2014mixed} and \cite{Chen2015} only consider variates of at most two different types in their estimation procedure.

\section{Model} \label{sect.model}
This section describes the graphical model for data of mixed types. In its most general form it is any exponential family distribution. Within the exponential family the model is first specified variate-wise, conditionally on all other variates. The parametric form of this conditionally formulated model warrants that the implied joint distribution of the variates is also an exponential family member. This correspondence between the variate-wise and joint model parameters endows the former (by way of zeros in the parameter) with a direct relation to conditional independencies between variate pairs, thus linking it to the underlying graph. Finally, parameter constraints are required to ensure that the proposed distribution is well-defined.

The multivariate exponential family is a broad class of probability distributions that describe the joint random behavior of a set of variates (possibly of mixed type). It encompasses many distributions for variates with a continuous, count and binary outcome space. All distributions share the following functional form:
\begin{eqnarray*}
f_{\mathbf{\Theta}}(\mathbf{y}) & = & h(\mathbf{y}) \exp \{ \eta(\mathbf{\Theta}) \hspace{1pt} T(\mathbf{y}) - D[\eta(\mathbf{\Theta})] \},
\end{eqnarray*}
where $\mathbf{\Theta}$ is a $p \times p$-dimensional parameter matrix, $h(\mathbf{y})$ is a non-negative base measure, $\eta(\mathbf{\Theta})$ is the natural or canonical parameter, $T(\mathbf{y})$ the sufficient statistic, and $D[\eta(\mathbf{\Theta})]$, the log-partition function or the normalization factor, which ensures $f_{\mathbf{\Theta}}(\mathbf{y})$ is indeed a probability distribution. The log-partition function $D[\eta(\mathbf{\Theta})]$ needs to be finite to ensure a well-defined distribution. For specific choices of $\eta, T$ and $h$, standard distributions are obtained. Theoretical results presented in Sections \ref{sect.estimation} and \ref{sect.algorithm} are stated for the multivariate exponential family, and thereby apply to all encompassing distributions. To provide for the envisioned practical purpose of reconstruction of the conditional dependence graph (as illustrated in Section \ref{sect.application}) we require and outline next a Markov random field in which the variates follow a particular exponential family member conditionally. This is thus a special case of the delineated class of exponential family distributions, as will be obvious from the parametric form of the Markov random field distribution.

Following \citealp{besag1974spatial} and \citealp{yang2014mixed} the probability distribution of each individual variate of $Y_{j}$ of $\mathbf{Y}$ conditioned on all remaining variates $\mathbf{Y}_{\setminus j}$ is assumed to be a (potentially distinct) univariate exponential family member, e.g. a Gaussian, exponential, or binomial distribution. Its (conditional) distribution is:
\begin{eqnarray} 
\label{form.nodeCondDist}
\quad P(Y_j \, | \, \mathbf{Y}_{\setminus j}) & \propto & h_j(Y_j) \exp \big[ \eta_{j}(\mathbf{\Theta}_{j,\setminus j}; \mathbf{Y}_{\setminus j}) \hspace{1pt} T_{j} (Y_j) - D_{j}( \eta_{j} ) \big].
\end{eqnarray}
Theorem \ref{thm.jointDist} below specifies the joint distribution for graphical models of variates that have a conditional distribution as in Display (\ref{form.nodeCondDist}). In particular, it states that there exists a joint distribution $P_{\mathbf{\Theta}}(\mathbf{Y})$ of $\mathbf{Y}$ such that $(\mathcal{G},\mathbf{Y})$ is a Markov random field if and only if each variate depends conditionally on the other variates through a linear combination of their univariate sufficient statistics. %It is analogous to Theorem (1) by \cite{yang2014mixed}.
\begin{theorem} (after \citealp{yang2014mixed}) \label{thm.jointDist} 
\\
Consider a $p$-variate random variable $\mathbf{Y} = \{Y_{j}\}_{j\in \mathcal{V}}$. Assume the distributions of each variate $Y_{j}$, $j \in \mathcal{V}$, conditionally on the remaining variates to be an exponential family member as in (\ref{form.nodeCondDist}). Let $\mathcal{G}=(\mathcal{V}, \mathcal{E})$ be a graph which decomposes into $\mathcal{C}$, the set of cliques of size at most two. Finally, the off-diagonal support of the MRF parameter $\mathbf{\Theta}$ matches the edge structure of $\mathcal{G}$.  Then, the following assumptions are equivalent:
\begin{compactitem}
\item[i)] For $j \in \mathcal{V}$, the natural parameter $\eta_{j}$ of the variate-wise conditional distribution (\ref{form.nodeCondDist}) is:
\begin{eqnarray}  \label{res_nat_param_full}
\eta_{j}(\mathbf{\Theta}_{j,\setminus j}; \mathbf{Y}_{\setminus j}) & = & \mathbf{\Theta}_{j, j} + \sum_{\{j' \in \mathcal{V} : (j, j') \in \mathcal{E}_{\mathsfit{C}}, \mathsfit{C} \in \mathcal{C} \}} \mathbf{\Theta}_{j, j'} \hspace{1pt} T_{j'} (Y_{j'}).
\end{eqnarray}
\item[ii)] There exists a joint distribution $P_{\mathbf{\Theta}}(\mathbf{Y})$ of $\mathbf{Y}$ such that $(\mathcal{G},\mathbf{Y})$ is a Markov random field.
\end{compactitem}
Moreover, by either assumption the joint distribution of $\mathbf{Y}$ is:
\begin{eqnarray} \label{res_joint_distr}
P_{\mathbf{\Theta}} (\mathbf{Y}) & \propto & \prod_{j \in  \mathcal{V}} h_{j} (Y_{j}) \exp \Big\{ T_{j}(Y_{j}) \Big[\mathbf{\Theta}_{j, j} + \sum_{\{j' \in \mathcal{V} : (j, j') \in \mathcal{E}_{\mathsfit{C}}, \mathsfit{C} \in \mathcal{C} \}}  \mathbf{\Theta}_{j, j'} \hspace{1pt} T_{j'} (Y_{j'}) \Big] \Big\}.
\end{eqnarray}
\end{theorem}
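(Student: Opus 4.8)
The plan is to prove the equivalence by establishing both implications, with the joint-distribution formula (\ref{res_joint_distr}) emerging from the constructive direction. Throughout I would lean on the strict positivity of the conditionals — guaranteed by the non-negative base measures $h_j$ together with the exponential form of (\ref{form.nodeCondDist}) — so that the Hammersley--Clifford theorem applies and any consistent joint distribution, when it exists and is normalizable, factorizes over the cliques of $\mathcal{G}$. The finiteness of the log-partition function I would treat as a standing assumption deferred to the parameter-constraint discussion, since it is orthogonal to the algebraic equivalence at stake here.

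For the implication i) $\Rightarrow$ ii), which is the routine direction, I would take (\ref{res_joint_distr}) as the candidate joint and verify two facts. First, collecting all factors of (\ref{res_joint_distr}) that involve $Y_j$ and absorbing the remainder into the normalization reproduces the conditional (\ref{form.nodeCondDist}) with natural parameter exactly (\ref{res_nat_param_full}); this step uses and forces the symmetry $\mathbf{\Theta}_{j,j'} = \mathbf{\Theta}_{j',j}$, so that the single cross-term $\mathbf{\Theta}_{j,j'} T_j(Y_j) T_{j'}(Y_{j'})$ in the joint feeds correctly into both the $j$- and $j'$-conditionals. Second, (\ref{res_joint_distr}) is already written as a product of singleton factors $h_j(Y_j)\exp[\mathbf{\Theta}_{j,j} T_j(Y_j)]$ and pairwise factors $\exp[\mathbf{\Theta}_{j,j'} T_j(Y_j) T_{j'}(Y_{j'})]$, the latter present only when $(j,j') \in \mathcal{E}$ because the support of $\mathbf{\Theta}$ matches the edge set. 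This is a clique factorization with cliques of size at most two, so the local Markov property, and hence that $(\mathcal{G},\mathbf{Y})$ is an MRF, follows from the factorization half of Hammersley--Clifford.

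The substantive direction is ii) $\Rightarrow$ i). Here I would start from an assumed joint MRF and invoke Hammersley--Clifford to write $\log P_{\mathbf{\Theta}}(\mathbf{Y}) = \sum_{j} \phi_j(Y_j) + \sum_{(j,j') \in \mathcal{E}} \phi_{j,j'}(Y_j,Y_{j'})$ up to an additive constant, with each $\phi_{j,j'}$ symmetric. Collecting the $Y_j$-dependent terms gives the conditional log-density, and equating it with the assumed one-parameter form (\ref{form.nodeCondDist}) shows, after subtracting the value at a fixed reference point $y_j^0$, that $\phi_j(Y_j) - \log h_j(Y_j) + \sum_{j'} \phi_{j,j'}(Y_j,Y_{j'})$ must be affine in $T_j(Y_j)$ with a coefficient $\eta_j$ that does not depend on $Y_j$. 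Enforcing this independence term by term forces each pairwise potential to satisfy $\phi_{j,j'}(Y_j,Y_{j'}) - \phi_{j,j'}(y_j^0,Y_{j'}) = T_j(Y_j)\, f_{j,j'}(Y_{j'})$ for some function $f_{j,j'}$ of $Y_{j'}$ alone, while the singleton part forces $\phi_j - \log h_j$ to be affine in $T_j$, which produces the diagonal term $\mathbf{\Theta}_{j,j}$.

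The crux, and the step I expect to be the main obstacle, is pinning down the bilinear form of $\phi_{j,j'}$. I would exploit the symmetry $\phi_{j,j'} = \phi_{j',j}$: the analogous separation obtained from the conditional of $Y_{j'}$ yields a second functional equation, and comparing the two forces the quotient $[f_{j,j'}(Y_{j'}) - f_{j,j'}(y_{j'}^0)]/T_{j'}(Y_{j'})$ to equal a function of $Y_j$ divided by $T_j(Y_j)$. Since the left side is free of $Y_j$ and the right side free of $Y_{j'}$, both must equal a common constant, which I would name $\mathbf{\Theta}_{j,j'}$. This gives $\phi_{j,j'}(Y_j,Y_{j'}) = \mathbf{\Theta}_{j,j'} T_j(Y_j) T_{j'}(Y_{j'})$ plus terms separable in $Y_j$ and $Y_{j'}$, the latter absorbed into the singleton potentials. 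Substituting back delivers $\eta_j = \mathbf{\Theta}_{j,j} + \sum_{j'} \mathbf{\Theta}_{j,j'} T_{j'}(Y_{j'})$, i.e. (\ref{res_nat_param_full}), with $\mathbf{\Theta}$ symmetric and supported on $\mathcal{E}$, and reassembling the Hammersley--Clifford decomposition reproduces (\ref{res_joint_distr}). The care required is entirely in making the separation-of-variables argument rigorous: ensuring $T_j$ is non-constant so the divisions are legitimate, and checking that absorbing the separable pieces does not disturb the edge correspondence of the support.
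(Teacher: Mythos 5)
Your proposal is correct and follows essentially the same route as the paper's own (supplementary) argument, which — as the paper notes — is the Besag-style proof underlying Theorem~1 of \cite{yang2014mixed}: the constructive verification of the conditionals of (\ref{res_joint_distr}) for i)~$\Rightarrow$~ii), and for ii)~$\Rightarrow$~i) the Hammersley--Clifford pairwise factorization followed by the two-sided separation-of-variables argument that pins each pairwise potential to $\mathbf{\Theta}_{j,j'}\,T_j(Y_j)\,T_{j'}(Y_{j'})$ with a symmetric $\mathbf{\Theta}$ supported on $\mathcal{E}$. Your deferral of normalizability to the parameter-constraint analysis also matches the paper, which handles well-definedness separately in Theorem~(\ref{res_thm_common_distr}) and Table~(\ref{table.parameterConstr}).
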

The theorem above differs from the original formulation in \citealp{yang2014mixed} in the sense that here it is restricted to pairwise interactions (i.e. cliques of size at most two).

For the reconstruction of the graph underlying the Markov random field, the edge set $\mathcal{E}$ is captured by the parameter $\mathbf{\Theta}$: nodes $j,j'  \in \mathcal{V}$ are connected by a direct edge $(j,j')\in \mathcal{E}$ if and only if $\mathbf{\Theta}_{j, j'} \neq 0$ (by the Hammersley-Clifford theorem, \citealp{Lauritzen1996}). This gives a simple parametric criterion to assess local Markov (in)dependence. Moreover, the parameter $\mathbf{\Theta}_{j,j'}$ can be interpreted as an interaction parameter between variables $Y_{j}$ and $Y_{j'}$.

We refer to the distribution from Display (\ref{res_joint_distr}) as the pairwise MRF distribution. After normalization of (\ref{res_joint_distr}), the joint distribution $P_{\mathbf{\Theta}}(\mathbf{Y})$ is fully specified by sufficient statistics and base measures of the exponential family members. For practical and illustrative purposes, the remainder will feature only four common exponential family members, the \textit{GLM family}: the Gaussian (with constant variance), exponential, Poisson and Bernoulli distributions. 

The joint distribution $P_{\mathbf{\Theta}}(\mathbf{Y})$ formed from the variate-wise conditional distributions need not be well-defined for arbitrary parameter choices. In order for $P_{\mathbf{\Theta}}(\mathbf{Y})$ to be well-defined, the log-normalizing constant $D[\eta(\mathbf{\Theta})]$ needs to be finite. For example, for the Gaussian graphical model, a special case of the pairwise MRF distribution under consideration, this is violated when the covariance matrix is singular. Lemma 1 of \citealp{Chen2015} specifies the constraints on the parameter $\mathbf{\Theta}$ that ensure a well-defined pairwise MRF distribution $P_{\mathbf{\Theta}}(\mathbf{Y})$ when the variates of $\mathbf{Y}$ are GLM family members conditionally. These constraints, i.e. for the GLM family, can be summarized in a table that is reproduced from \citealp{Chen2015} and -- for completeness -- included in Supplementary Material A. More general results for exponential family members may be derived, see e.g. \citealp{yang2014mixed}. 

The parameter constraints for a well-defined $P_{\mathbf{\Theta}}(\mathbf{Y})$ as identified by \cite{Chen2015} are restrictive on the structure of graph and the admissible interactions. As the graph is implicated by the off-diagonal support by $\mathbf{\Theta}$, the latter's constraints for well-definedness imply that the nodes corresponding to conditionally Gaussian random variables cannot be connected to the nodes representing exponential and/or Poisson random variables. Moreover, when $Y_{j}$ and $Y_{j'}$ are assumed to be Poisson and/or exponential random variables conditionally on the other variates, their interaction can only be negative. However, these restrictions can be relaxed by modeling data with, for example, a truncated Poisson distribution (\citealp{yang2014mixed}), or, while possibly not resulting in a well-defined joint distribution, might even be ignored altogether for network estimation (\citealp{Chen2015}).

Finally, the pairwise MRF distribution $P_{\mathbf{\Theta}}(\mathbf{Y})$ is well-defined on a convex parameter space (see Supplementary Material B). This implies that, when limiting ourselves to the class of pairwise MRF distributions for the GLM family and adhere to the constraints for well-definedness, parameter estimation amounts to a concave optimization problem with a convex domain.

\section{Estimation} \label{sect.estimation}
The parameter $\mathbf{\Theta}$ of the multivariate exponential family distribution $P_{\mathbf{\Theta}}(\mathbf{Y})$ is now to be learned from (high-dimensional) data. Straightforward maximization of the penalized loglikelihood is impossible due to the fact that the log-partition function cannot be evaluated in practice. For example, the partition function of the Ising model, which involves $p$ binary variates, sums over all $2^{p}$ configurations.  For large $p$ this becomes computationally intractable. Inspired by the work of \citealp{besag1974spatial} and \citealp{Tibshirani2009} this is circumvented by the replacement of the likelihood by the pseudo-likelihood comprising the variate-wise conditional distributions. We show that the maximum penalized pseudo-likelihood estimator of the exponential family model parameter is -- under conditions -- consistent. Finally, we present an algorithm for the numerical evaluation of this proposed estimator that efficiently addresses the computational complexity. Both results carry over to the pairwise MRF parameter as special case of the multivariate exponential family.

Consider an identically and independently distributed sample of $p$-variate random variables $\mathbf{Y}_1, \ldots, \mathbf{Y}_n$ all drawn from $P_{\mathbf{\Theta}}$. The associated (sample) pseudo-loglikelihood is a composite loglikelihood of all variate-wise conditional distributions averaged over the observations:
\begin{eqnarray} 
\label{res_def_sample_PL}
\mathcal{L}_{PL}(\mathbf{\Theta}, \mathbf{Y}_1, \ldots, \mathbf{Y}_n) &  = & \frac{1}{n}\sum_{i=1}^n \sum_{j \in \mathcal{V}} \log [ P_{\mathbf{\Theta}} (Y_{ij} \, | \, \mathbf{Y}_{i, \setminus j})].
\end{eqnarray}
The maximum penalized pseudo-loglikelihood augments this by a strictly convex, continuous penalty function $f^{\mbox{{\tiny pen}}}(\mathbf{\Theta}; \lambda)$ with penalty parameter $\lambda > 0$, hence \newline $\mathcal{L}_{\mbox{{\tiny penPL}}}(\mathbf{\Theta}, \mathbf{Y}_1, \ldots, \mathbf{Y}_n)  := \mathcal{L}_{PL}(\mathbf{\Theta}, \mathbf{Y}_1, \ldots, \mathbf{Y}_n) - f^{\mbox{{\tiny pen}}}(\mathbf{\Theta}; \lambda)$. Then the maximum penalized pseudo-likelihood estimator of $\mathbf{\Theta}$ is:
\begin{eqnarray} \label{form.ridgePseudoEstimator}
\widehat{\mathbf{\Theta}}^{\mbox{\tiny pen}}(\lambda)  & = & \arg \max_{\mathbf{\Theta}}  \mathcal{L}_{\mbox{{\tiny penPL}}} (\mathbf{\Theta}, \mathbf{Y}_1, \ldots, \mathbf{Y}_n).
\end{eqnarray}
When $f^{\mbox{{\tiny pen}}}(\mathbf{\Theta}; \lambda)$ is proportional to the sum of the square of the elements of the parameter, $f^{\mbox{{\tiny pen}}}(\mathbf{\Theta}; \lambda) = \tfrac{1}{2} \lambda \| \mathbf{\Theta} \|_F^2$ with $\| \cdot \|_F$ the Frobenius norm, it is referred to as the ridge penalty. With the ridge penalty, the estimator (\ref{form.ridgePseudoEstimator}) is called the maximum ridge pseudo-likelihood estimator. We show that the maximum penalized pseudo-likelihood estimator (\ref{form.ridgePseudoEstimator}) is consistent in the traditional sense, i.e. a regime of fixed dimension $p$ and an increasing sample size $n$.  Note that, in practice -- as recommended by \cite{Tibshirani2009} -- we employ $f^{\mbox{{\tiny pen}}}(\mathbf{\Theta}; \lambda) = \tfrac{1}{2} \lambda \sum_{j, j'=1, j \not= j'}^p [ (\mathbf{\Theta})_{j,j'}]^2$, thus leaving the diagonal unpenalized.  Empirically, we observed this yields a better model fit, which is intuitively understandable as the estimator is then able to (unconstrainedly) account for (at least) the marginal variation in each variate. The theoretical results below are, however, presented for $f^{\mbox{{\tiny pen}}}(\mathbf{\Theta}; \lambda) = \tfrac{1}{2} \lambda \| \mathbf{\Theta} \|_F^2$, as this simplifies the mathematical proof, which can be modified for the case with an unpenalized diagonal.

The motivation for a traditional consistency result, i.e. in the $p$ fixed while $n \rightarrow \infty$ regime, is three-fold. 
\begin{compactitem}
\item[\textit{1)}] It suffices for practical purposes in the `omics-field' where the system's dimensions are known and fixed at the outset of the analysis. 

\item[\textit{2)}] Almost all existing high-dimensional results (cf. e.g. \citealp{LeeHastie2013}, \citealp{Chen2015}, \citealp{Lee2015}) show `sparsistency' -- a contraction of sparse and consistency -- which amounts to the probability of correct (sparse) network structure selection tending to one as $p \rightarrow \infty$. Sparsistency thus addresses the selection property of an estimator but not the (limiting) quality of the estimator in terms of the parameter values. In this light sparsistency results have -- understandably -- only been proven for $\ell_1$-penalized (or a generalization thereof) estimators (e.g., \citealp{Lee2015}). Under the assumptions of \textit{i)} strong convexity of the loss function  and \textit{ii)} irrepresentability of the model (for details see \citealp{Lee2015}), sparsistency is proven for sparse models (with the maximum vertex degree $d$ such that $d=o(n)$, \citealp{Chen2015}), with a minimal parameter value (in the absolute sense) for those parameters being nonzero, and a minimum sample size dependent on the dimension $p$ and the degree of sparsity (\citealp{Lee2015}). These assumptions and requirements are hard -- if not impossible -- to verify for any practical case at hand  (\citealp{Bento2009}, \citealp{Anandkumar2012}), thus questioning the usefulness of sparsistency results (\citealp{LeeHastie2013}). Moreover, for the proposed ridge estimator, which does not select, sparsistency appears not to be the relevant concept. 

\item[\textit{3)}] An information theoretic analysis shows that learning graphical models, even for simple cases such as the Ising or Gaussian graphical model, requires at least $n=\Omega(d^2\log p)$ samples (in words, $n$ is at least of order $d^2 \log p$) as $p\to\infty$ (\citealp{das2012learning}). For example, when $p > n$, one finds that $p > n > d^2\log p$. This, when solved for $d$, yields: $d < \sqrt{p / \log p}$. For $p=100$ the maximum vertex degree $d$ would then be bounded by $4.6$, which rules out dense networks. Therefore, there is no hope for successful graphical model selection in the high-dimensional setting ($n<p$) when the maximum vertex degree satisfies $d = \Omega(\sqrt{p/\log p})$, which can be assumed to be the case for molecular biology applications or scale-free and dense networks in general.
\end{compactitem}
Hence, the focus on traditional consistency which is a minimum requirement of a novel estimator.

\begin{theorem} \label{thm.consistency}
Let  $\mathbf{Y}_1, \ldots, \mathbf{Y}_n$ be $p$-variate draws from a exponential family distribution $P_{\mathbf{\Theta}}(\mathbf{Y}) \propto \exp[\mathbf{\Theta} \, T(\mathbf{Y}) + h(\mathbf{Y})]$. Temporarily supply $\widehat{\mathbf{\Theta}}$ and $\lambda$ with an index $n$ to explicate their sample size dependence. Then the maximum  penalized pseudo-likelihood estimator $\widehat{\mathbf{\Theta}}_n^{\mbox{{\tiny pen}}}$ maximizing the penalized pseudo-likelihood is consistent, i.e., $\widehat{\mathbf{\Theta}}_n^{\mbox{{\tiny pen}}} \stackrel{p}{\longrightarrow}\mathbf{\Theta}$ as $n \rightarrow \infty$ if,

\begin{compactitem}
\item[i)] The parameter space is compact and such that $P_{\mathbf{\Theta}}(\mathbf{Y})$ is well-defined for all $\mathbf{\Theta}$,
\item[ii)] $\mathbf{\Theta} \, T(\mathbf{Y}) + h(\mathbf{Y})$ can be bounded by a polynomial, $|\mathbf{\Theta} \, T(\mathbf{Y}) + h(\mathbf{Y})| \leq c_1 + c_2 \sum_{j \in \mathcal{V}} |Y_{j}|^\beta$ for constants $c_1,c_2<\infty$ and $\beta\in\mathbb{N}$,
\item[iii)] The penalty function $f^{\mbox{{\tiny pen}}}(\mathbf{\Theta})$ is strict convex, continuous, and the penalty parameter $\lambda_n$ converges in probability to zero: $\lambda_n \stackrel{p}{\longrightarrow} 0$ as $n \rightarrow \infty$.
\end{compactitem}
\end{theorem}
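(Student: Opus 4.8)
The plan is to treat $\widehat{\mathbf{\Theta}}_n^{\mbox{{\tiny pen}}}$ as an extremum (M-)estimator and to invoke the classical consistency argument: if a random objective converges uniformly in probability over a compact parameter space to a nonrandom limit uniquely maximized at the data-generating parameter, then its maximizer converges in probability to that parameter. Write $\mathbf{\Theta}_0$ for the true parameter (the $\mathbf{\Theta}$ of the statement) and introduce the population pseudo-likelihood
\begin{eqnarray*}
L_{PL}(\mathbf{\Theta}) & = & \mathbb{E}_{\mathbf{\Theta}_0} \Big[ \sum_{j \in \mathcal{V}} \log P_{\mathbf{\Theta}}(Y_j \, | \, \mathbf{Y}_{\setminus j}) \Big],
\end{eqnarray*}
the pointwise limit of $\mathcal{L}_{PL}$ by the law of large numbers. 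The two hypotheses to verify are (a) $\sup_{\mathbf{\Theta}} |\mathcal{L}_{\mbox{{\tiny penPL}}}(\mathbf{\Theta}) - L_{PL}(\mathbf{\Theta})| \stackrel{p}{\longrightarrow} 0$ and (b) that $L_{PL}$ has a unique maximizer at $\mathbf{\Theta}_0$. Existence of $\widehat{\mathbf{\Theta}}_n^{\mbox{{\tiny pen}}}$ is guaranteed by continuity of $\mathcal{L}_{\mbox{{\tiny penPL}}}$ together with compactness of the parameter space in i), while the concavity of the pseudo-likelihood (the log-partition is convex in the natural parameter, which is affine in $\mathbf{\Theta}$) and the strict convexity of the penalty in iii) render the objective strictly concave.

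For (a) I would split the objective into the pseudo-likelihood and the penalty. The penalty contributes $\sup_{\mathbf{\Theta}} f^{\mbox{{\tiny pen}}}(\mathbf{\Theta}; \lambda_n)$, which vanishes in probability: $f^{\mbox{{\tiny pen}}}$ is continuous by iii), hence bounded on the compact parameter space of i), and $\lambda_n \stackrel{p}{\longrightarrow} 0$. For the pseudo-likelihood term a uniform law of large numbers suffices. Its summands $\mathbf{\Theta} \mapsto \log P_{\mathbf{\Theta}}(Y_{ij} \, | \, \mathbf{Y}_{i,\setminus j})$ are continuous in $\mathbf{\Theta}$, and the bound ii) furnishes an integrable envelope: the conditional log-densities are controlled by $c_1 + c_2 \sum_{j} |Y_j|^\beta$, whose expectation under $P_{\mathbf{\Theta}_0}$ is finite because a well-defined exponential family (condition i) possesses finite moments of every polynomial order. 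Continuity of the summands, compactness of the parameter space, and this common integrable envelope then yield the uniform law of large numbers and hence (a).

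For (b) I would exploit the information inequality. Subtracting the value at $\mathbf{\Theta}_0$ from the summand-wise expectation gives, for each $j$,
\begin{eqnarray*}
\mathbb{E}_{\mathbf{\Theta}_0} \big[ \log P_{\mathbf{\Theta}_0}(Y_j \, | \, \mathbf{Y}_{\setminus j}) - \log P_{\mathbf{\Theta}}(Y_j \, | \, \mathbf{Y}_{\setminus j}) \big] & = & \mathbb{E}_{\mathbf{\Theta}_0} \big[ \mathrm{KL}\big( P_{\mathbf{\Theta}_0}(\cdot \, | \, \mathbf{Y}_{\setminus j}) \, \| \, P_{\mathbf{\Theta}}(\cdot \, | \, \mathbf{Y}_{\setminus j}) \big) \big] \; \geq \; 0,
\end{eqnarray*}
by Gibbs' inequality, so $L_{PL}(\mathbf{\Theta}_0) - L_{PL}(\mathbf{\Theta}) \geq 0$ and $\mathbf{\Theta}_0$ is a maximizer. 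Equality forces every conditional $P_{\mathbf{\Theta}}(\cdot \, | \, \mathbf{Y}_{\setminus j})$ to coincide with $P_{\mathbf{\Theta}_0}(\cdot \, | \, \mathbf{Y}_{\setminus j})$ almost surely. Since these conditionals are exponential family members whose natural parameter $\eta_j = \mathbf{\Theta}_{j,j} + \sum_{j'} \mathbf{\Theta}_{j,j'} T_{j'}(Y_{j'})$ is affine in the sufficient statistics, equality of the conditionals for all $j$ and $\mathbf{Y}_{\setminus j}$ equates these affine forms; linear independence of $1$ and the $T_{j'}(\cdot)$ for a non-degenerate family then matches all entries of $\mathbf{\Theta}$ with those of $\mathbf{\Theta}_0$, giving $\mathbf{\Theta} = \mathbf{\Theta}_0$ and hence uniqueness.

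With (a) and (b) in hand, the standard extremum-estimator conclusion applies: on the compact parameter space the supremum of $L_{PL}$ outside any neighborhood of $\mathbf{\Theta}_0$ is strictly below $L_{PL}(\mathbf{\Theta}_0)$, and uniform convergence forces $\widehat{\mathbf{\Theta}}_n^{\mbox{{\tiny pen}}}$ into that neighborhood with probability tending to one, i.e. $\widehat{\mathbf{\Theta}}_n^{\mbox{{\tiny pen}}} \stackrel{p}{\longrightarrow} \mathbf{\Theta}_0$. I expect the identification step (b) to be the main obstacle: the delicate point is upgrading the information inequality from ``$\mathbf{\Theta}_0$ is a maximizer'' to ``$\mathbf{\Theta}_0$ is the unique maximizer'', which hinges on the collection of conditionals jointly identifying $\mathbf{\Theta}$, and thus on non-degeneracy (linear independence of the sufficient statistics) of the underlying exponential family. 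Verifying the integrable envelope in (a), while more routine, also requires care in relating the joint bound ii) to the individual conditional log-densities through the log-partition terms $D_j(\eta_j)$.
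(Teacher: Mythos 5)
Your architecture is the one the paper itself uses in Supplementary Material A: treat $\widehat{\mathbf{\Theta}}_n^{\mbox{{\tiny pen}}}$ as an extremum estimator, show uniform convergence in probability of the penalized objective to the population pseudo-likelihood over the compact parameter space of i), kill the penalty via continuity on a compact set times $\lambda_n \stackrel{p}{\longrightarrow} 0$, identify $\mathbf{\Theta}_0$ as the unique maximizer through the averaged Kullback--Leibler inequality on the variate-wise conditionals, and conclude by the standard Wald-type argument. Your identification step (b) is carried through correctly: equality in Gibbs' inequality forces the conditionals to agree almost surely, and since the natural parameters $\eta_j$ are affine in the sufficient statistics, non-degeneracy of the family matches $\mathbf{\Theta}$ with $\mathbf{\Theta}_0$ entrywise.

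The genuine gap is in step (a), at exactly the point you flagged and then discharged with a claim that is false in general: that ``a well-defined exponential family (condition i) possesses finite moments of every polynomial order.'' Finiteness of the log-partition function controls moments of the sufficient statistic $T(\mathbf{Y})$ at interior parameter values, not moments of $\mathbf{Y}$ itself. For instance, a density proportional to $\exp[\mathbf{\Theta}\, T(y) + h(y)]$ with bounded $T$ and $h(y) = -2\log|y|$ for $|y| \geq 1$ satisfies the polynomial bound of ii) with $\beta = 1$, is well-defined for all $\mathbf{\Theta}$ in a compact set, and yet has $\mathbb{E}|Y| = \infty$; so integrability of your envelope $c_1 + c_2 \sum_{j} |Y_j|^\beta$ must be argued from the interplay of i) and ii), not invoked. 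The same soft spot recurs in controlling the random log-partition terms: writing $\log P_{\mathbf{\Theta}}(Y_j \mid \mathbf{Y}_{\setminus j})$ as the joint kernel minus $\log \int \exp[\mbox{kernel}(y_j, \mathbf{Y}_{\setminus j})]\, d\mu(y_j)$, condition ii) does give the lower bound on the integral (it is at least $e^{-c_1 - c_2 \sum_{j' \neq j} |Y_{j'}|^\beta} \int e^{-c_2 |y_j|^\beta}\, d\mu(y_j)$, finite for Lebesgue or counting measure), but the upper-bound direction fails from ii) alone for continuous variates, since $\int \exp[c_2 |y_j|^\beta]\, d\mu(y_j)$ diverges. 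There the well-definedness constraints of i) must do real work --- e.g., for a conditionally Poisson variate the sign constraints of Table 1 bound $\eta_j$ from above so that $D_j(\eta_j) = e^{\eta_j}$ is bounded, and for conditionally exponential variates $|D_j(\eta_j)| = |\log(-\eta_j)|$ is polynomially controlled. As written, your step (a) is a correct plan whose decisive verification --- producing a single integrable, $\mathbf{\Theta}$-uniform majorant for the conditional log-densities --- is still owed.
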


\begin{proof}
Refer to Supplementary Material B.
\end{proof}

Theorem \ref{thm.consistency} differs from related theorems on $\ell_1$-estimators in two respects. Most importantly, \textit{i)} it holds uniformly over all (well-defined) models, i.e. it does not require a sparsity assumption. Morever, \textit{ii)} the assumption on the penalty parameter is of a probabilistic rather than a specific deterministic nature, which we consider to be more suited as $\lambda$ is later chosen in a data-driven fashion.

Theorem \ref{thm.consistency} warrants -- under conditions -- the convergence of the maximum  penalized pseudo-likelihood estimator $\widehat{\mathbf{\Theta}}$ as the sample size increases $n\to\infty$. These conditions require a compact parameter space, a common assumption in the field of graphical models (\citealp{Lee2015}). Theorem \ref{thm.consistency} holds in general for any multivariate exponential family distribution and is therefore generally applicable with the pairwise MRF distribution as special case. Hence, when $P_{\mathbf{\Theta}}(\mathbf{Y})$ is well-defined for the GLM family, we obtain the following corollary.

\begin{corollary} \mbox{ }
Let $\mathbf{Y}_1, \ldots,\mathbf{Y}_n$ be $p$-variate draws from a well-defined pairwise MRF distribution  $P_{\mathbf{\Theta}}(\mathbf{Y})$ with parameter $\mathbf{\Theta}$.  The ridge pseudo-likelihood estimator $\widehat{\mathbf{\Theta}}_n^{\mbox{{\tiny ridge}}}$ that maximizes the ridge-penalized pseudo-likelihood is consistent, i.e.,
$\widehat{\mathbf{\Theta}}_n^{\mbox{{\tiny ridge}}} \stackrel{p}{\longrightarrow}\mathbf{\Theta}$ as $n \rightarrow \infty$, if the parameter space is compact, and the penalty parameter $\lambda_n$ converges in probability to zero:  $\lambda_n \stackrel{p}{\longrightarrow} 0$ as $n \rightarrow \infty$.
\end{corollary}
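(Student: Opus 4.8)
The plan is to obtain the Corollary as a direct specialization of Theorem (\ref{thm.consistency}): since the pairwise joint MRF distribution (\ref{res_joint_distr}) is itself a member of the multivariate exponential family, it suffices to check that the hypotheses of the Corollary force the three conditions (i)--(iii) of Theorem (\ref{thm.consistency}) to hold for the GLM family under the ridge penalty. I would therefore organize the argument around verifying these three conditions in turn and then invoke Theorem (\ref{thm.consistency}) verbatim.

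For condition (i), the Corollary already assumes a compact parameter space, so the only thing left to establish is that this space can be taken inside the region on which $P_{\mathbf{\Theta}}(\mathbf{Y})$ is well-defined. Here I would appeal to Theorem (\ref{res_thm_common_distr}): for the GLM family the distribution is well-defined precisely on the set carved out by Table (\ref{table.parameterConstr}), which (as noted in the text and Supplementary Material A) is convex. Intersecting a compact set with this region, and staying within its strict constraints, keeps the space compact, so condition (i) is met. For condition (iii), the ridge penalty $f^{\mbox{{\tiny pen}}}(\mathbf{\Theta};\lambda) = \tfrac12 \lambda \| \mathbf{\Theta} \|_F^2$ is continuous and strictly convex in $\mathbf{\Theta}$ for every $\lambda > 0$ (its Hessian in the vectorized parameter is $\lambda$ times the identity, hence positive definite), while the assumption $\lambda_n \stackrel{p}{\to} 0$ is exactly the remaining requirement. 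Thus condition (iii) holds immediately.

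Condition (ii) is where the real work lies, and I expect it to be the main obstacle. I would write the exponent of (\ref{res_joint_distr}) in the abstract form $\mathbf{\Theta}\,T(\mathbf{Y}) + h(\mathbf{Y})$, where $h(\mathbf{Y}) = \sum_{j} \log h_j(Y_j)$, and bound each piece. For all four GLM members the univariate sufficient statistic is $T_j(Y_j) = Y_j$, so the interaction contributions take the form $\mathbf{\Theta}_{j,j'} Y_j Y_{j'}$; using $|Y_j Y_{j'}| \le \tfrac12 (Y_j^2 + Y_{j'}^2)$ together with the boundedness of $\mathbf{\Theta}$ on the compact parameter space bounds these by a multiple of $\sum_{j} |Y_j|^2$, and the diagonal terms $\mathbf{\Theta}_{j,j} Y_j$ by $\sum_j |Y_j|$. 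The subtle part is the log-base-measures: for Bernoulli and exponential these are constant, for Gaussian $\log h_j(Y_j) = -\tfrac12 (\mathbf{\Sigma}^{-1})_{jj} Y_j^2$ is already quadratic, but for Poisson $\log h_j(Y_j) = -\log(Y_j!)$ has unbounded support and is not polynomial. Here I would invoke Stirling's bound, $\log(Y_j!) \le c\,(1 + |Y_j|^2)$ (indeed $\log(Y_j!)$ grows like $Y_j \log Y_j$, which is dominated by $Y_j^2$), so this term too is absorbed into a polynomial of degree $\beta = 2$.

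Collecting the finitely many constants, all bounded because $\mathbf{\Theta}$ ranges over a compact set, then yields the required estimate $|\mathbf{\Theta}\,T(\mathbf{Y}) + h(\mathbf{Y})| \le c_1 + c_2 \sum_{j \in \mathcal{V}} |Y_j|^\beta$ with $\beta = 2$, which establishes condition (ii). With conditions (i)--(iii) verified, Theorem (\ref{thm.consistency}) applies directly and delivers $\widehat{\mathbf{\Theta}}_n^{\mbox{{\tiny ridge}}} \stackrel{p}{\to} \mathbf{\Theta}$ as $n \to \infty$, completing the proof. I anticipate that the only delicate checking is in handling the Poisson base measure and the quadratic interaction terms in (ii); the remaining conditions follow almost by inspection from the compactness and ridge-penalty assumptions already built into the Corollary's statement.
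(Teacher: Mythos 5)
Your proposal is correct and follows essentially the same route as the paper: the Corollary is obtained by specializing Theorem (\ref{thm.consistency}), checking condition (i) via the well-definedness constraints of Table (\ref{table.parameterConstr}), condition (iii) from the strict convexity and continuity of the ridge penalty together with the assumed $\lambda_n \stackrel{p}{\longrightarrow} 0$, and condition (ii) by a degree-two polynomial bound in which the only nontrivial term is the Poisson base measure, controlled exactly as you do via $\log(Y_j!) \leq Y_j \log Y_j \leq Y_j^2$, with all constants uniform over the compact parameter space. One cosmetic caveat: your remark about intersecting a compact set with the well-defined region is unnecessary and, as stated, not quite right (that region involves strict inequalities, so the intersection need not be compact); the Corollary simply assumes the parameter space is a compact set on which $P_{\mathbf{\Theta}}(\mathbf{Y})$ is well-defined, so condition (i) holds directly by hypothesis.
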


\begin{proof}
Refer to Supplementary Material B.
\end{proof}

\section{Algorithm} \label{sect.algorithm}
Maximization of the ridge pseudo-loglikelihood presents due to a convex optimization problem (a concave pseudo-loglikelihood and convex parameter space). To this end we present a parallel block coordinate Newton Raphson algorithm for numerical evaluation of the penalized pseudo-loglikelihood estimator $\widehat{\mathbf{\Theta}}(\lambda)$. We show that this algorithm yields a sequence of updated parameters that converge to $\widehat{\mathbf{\Theta}}(\lambda)$ and terminates after a finite number of iterations. The results in this section do not require the parameter constraints for well-definedness of Section \ref{sect.model} and thus hold for any multivariate exponential family.

The strict concavity of the optimization problem (\ref{form.ridgePseudoEstimator}) and the smoothness of $\mathcal{L}_{\mbox{{\tiny penPL}}}$ permit the application of the Newton-Raphson algorithm to find the estimate. The Newton-Raphson algorithm starts with an initial guess $\widehat{\mathbf{\Theta}}^{(0)}(\lambda)$ and -- motivated by a Taylor series approximation -- updates this sequentially. This generates a sequence $\{ \widehat{\mathbf{\Theta}}^{(k)}(\lambda) \}_{k \geq 0}$ that converges to $\widehat{\mathbf{\Theta}} (\lambda)$ (\citealp{fletcher2013}). However, the Newton-Raphson algorithm requires inversion of the Hessian matrix and is reported to be slow for pseudo-loglikelihood maximization (\citealp{LeeHastie2013}, \citealp{Chen2015}): it has computational complexity $O(p^6)$ for $p$ variates. Instead of a naive implementation of the Newton-Raphson algorithm to solve (\ref{form.ridgePseudoEstimator}), the remainder of this section describes a block coordinate approach (inspired by \citealp{Xu2013}) that speeds up the evaluation of the estimator by exploiting the structure of the pseudo-likelihood and splitting the optimization problem (\ref{form.ridgePseudoEstimator}) into multiple simpler subproblems. The novelty of this parallel block coordinate Newton-Raphson algorithm is necessary to answer to the ever increasing size of data sets and make optimal use of available multi-core processing systems. Finally, in contrast to other pseudo-likelihood approaches by \citealp{Tibshirani2009} or \citealp{LeeHastie2013}, the presented approach allows for the use of all second-order information (i.e. the Hessian) without much increase in computational complexity and the benefit of potentially faster convergence.

In order to describe the block coordinate approach some notation is introduced. Define $q=\frac{1}{2}p(p+1)$, the number of unique parameters of $\mathbf{\Theta}$. The set of unique parameter indices is denoted by $\mathcal{Q} =  \{(j, j') \, : \,  j \leq j' \in \mathcal{V} \}$ and we use $\ttheta$ as shorthand for the $q$-dimensional vector of unique parameters $\{\mathbf{\Theta}_{j, j'} \}_{(j, j') \in \mathcal{Q} }$. Furthermore, write $\ttheta_j$ for $\mathbf{\Theta}_{\ast,j} = (\mathbf{\Theta}_{j,\ast})^{\top}$, the $p$-dimensional vector of all unique parameters of $\mathbf{\Theta}$ that correspond to the $j$-th variate. Consequently, for $j \not= j'$ the corresponding $\ttheta_j$ and $\ttheta_{j'}$ have parameter(s) of $\mathbf{\Theta}$ in common. Finally, let $\mathbf{H}_{j}$ be the $p\times p$-dimensional submatrix of the Hessian limited to the elements that relate to the $j$-th variate, i.e.: $\mathbf{H}_{j}  =   \partial^2 \mathcal{L}_{\mbox{{\tiny penPL}}} / \partial \ttheta_j \partial \ttheta_j^{\top}$.

In our block coordinate approach we maximize the penalized pseudo-loglikelihood with respect to the parameter subvector $\ttheta_j$ for $j \in \mathcal{V}$ while all other parameters are temporarily kept constant at their current value. Per block we maximize by means of the Newton-Raphson algorithm starting from the initial guess $\hat{\ttheta}^{(0)}(\lambda)$ and updating the current parameter value $\hat{\ttheta}_j^{(k)}(\lambda)$ by $\hat{\ttheta}_j^{(k+1)}(\lambda)$ through:
\begin{eqnarray} 
\label{res_linear_feature_systems}
\hat{\ttheta}_j^{(k+1)}(\lambda) & = &  \hat{\ttheta}_j^{(k)}(\lambda) - \left. \Big( \frac{\partial^2 \mathcal{L}_{\mbox{{\tiny penPL}}} }{ \partial \ttheta_j  \partial \ttheta_j^{\top} } \Big)^{-1} \right|_{\ttheta = \hat{\ttheta}^{(k)}(\lambda)} \times \left. \frac{\partial \mathcal{L}_{\mbox{{\tiny penPL}}} }{ \partial \ttheta_j } \right|_{\ttheta = \hat{\ttheta}^{(k)}(\lambda)}. 
\end{eqnarray}
Block-wise the procedure converges to the optimum, that is, the maximum of $\mathcal{L}_{\mbox{{\tiny penPL}}}$ given the other parameters of $\ttheta$. Sequential application of the block coordinate approach is -- by the concavity of $\mathcal{L}_{\mbox{{\tiny penPL}}}$ -- then guaranteed to converge to the desired estimate. Sequential application of the block coordinate approach may be slow and is ran here in parallel for all $j \in \mathcal{V}$ simultaneously. This yields $\{\hat{\ttheta}_j^{(k+1)} \}_{j \in \mathcal{V} }$. But as some elements of $\ttheta_{j}$ and $\ttheta_{j'}$ map to the same element of $\ttheta$, multiple estimates of the latter are thus available. Hence, the results of each parallel iteration need to be combined in order to provide a single update of the full estimate $\hat{\mathbf{\ttheta}}^{(k)}$. This update of $\hat{\ttheta}^{(k)}$ should increase $\mathcal{L}_{\mbox{{\tiny penPL}}}$ and iteratively solve the concave optimization problem (\ref{form.ridgePseudoEstimator}). We find such an update in the direction of the sum of the block-wise updates of $\{\hat{\ttheta}_j^{(k+1)} \}_{j\in\mathcal{V}}$. A well-chosen step size in this direction then provides a suitable update of $\hat{\ttheta}^{(k)}$. Alternatively, to avoid the need for combining block-wise updates one may seek a split of the elements of $\mathbf{\Theta}$ into blocks without overlap. This, however, raises two issues. First, at each iteration $\mathbf{\Theta}$ is effectively estimated by a series of parallelly executed regressions. These regression share parameter. One may then use the estimate on a shared parameter from one equation in the other. But as the algorithm is parallelized, one can only use the estimate from the previous iteration, which will affect the convergence. Secondly, it requires a choice: which equation provides the estimate for a shared parameter. There is no obvious rationale that tells which one should prevail.

% \newpage 
\begin{figure*}[h!]
\begin{center}
\centering
\begin{tabular}{c}
{\includegraphics[angle=0,scale=0.8,clip,trim=40 510 40 50 ]{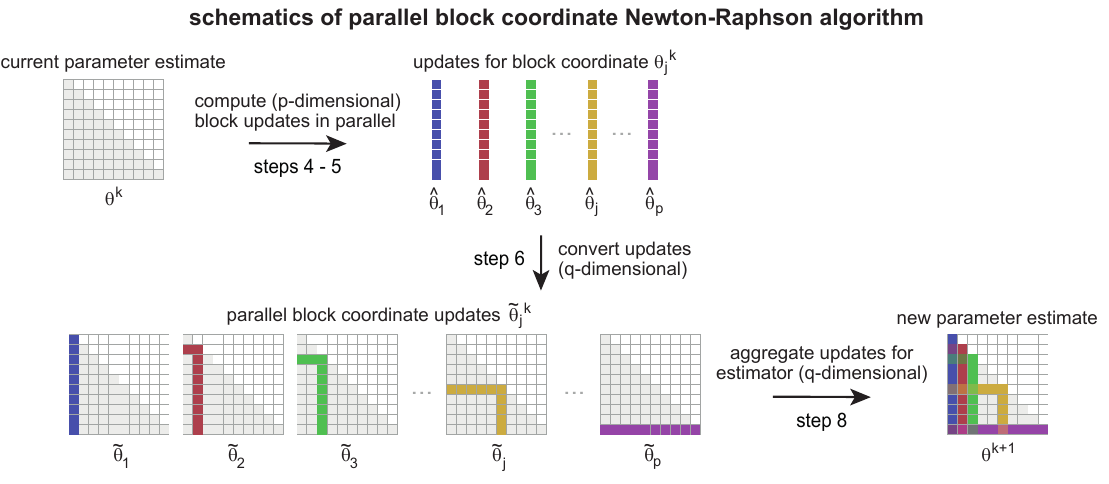}}
\end{tabular}
\end{center}
\vspace{-0.5cm}
\caption{Parameter updating by Algorithm \ref{alg.pseudocode}. First, the $p$ subvectors $\ttheta_j$ are updated to novel $\hat{\ttheta}_j$ (step 5 of Algorithm \ref{alg.pseudocode}). These updates are interspersed with zero's to form the $q$-dimensional vectors $\tilde{\ttheta}_j$ (step 6 of Algorithm \ref{alg.pseudocode}). Finally, the $\tilde{\ttheta}_j$'s are averaged weightedly to produce the update of $\ttheta$ (step 8 of Algorithm \ref{alg.pseudocode}).
}
\label{fig.parameterUpdate}
\end{figure*}
\afterpage{\clearpage}

Algorithm \ref{alg.pseudocode} gives a pseudo-code description of the parallel block coordinate Newton-Raphson algorithm (see Figure \ref{fig.parameterUpdate} for a visualization of the combination of the block-wise estimates). Theorem \ref{thm.convergenceAlgorithm} states that Algorithm \ref{alg.pseudocode} converges to the maximum  penalized pseudo-likelihood estimator and terminates. While Theorem \ref{thm.convergenceAlgorithm} is a rather general result for the maximum penalized pseudo-likelihood estimator of exponential family distributions, as special case the same result follows for the pairwise MRF distribution with the GLM family.

\begin{theorem} \label{thm.convergenceAlgorithm}
Let $\mathbf{Y}_1,\ldots,\mathbf{Y}_n$ be $n$ independent draws from a $p$-variate exponential family distribution $P_{\mathbf{\Theta}} \, (\mathbf{Y})\propto\exp[\mathbf{\Theta} \, T(\mathbf{Y}) + h(\mathbf{Y})]$. Assume that the parameter space of $\mathbf{\Theta}$ is compact. Let $\widehat{\mathbf{\Theta}}(\lambda)$ be the unique global maximum of the penalized pseudo-likelihood $\mathcal{L}_{\mbox{\tiny penPL}}(\mathbf{\Theta},\mathbf{Y}_1,\ldots,\mathbf{Y}_n)$. Then, for any initial parameter $\ttheta^{(0)}$, threshold $\tau>0$ and sufficiently large multiplier $\alpha\geq p$, Algorithm \ref{alg.pseudocode} terminates after a finite number of iterations and generates a sequence of parameters $\{\ttheta^{(k)}\}_{k\geq 0}$ that converge to $\widehat{\mathbf{\Theta}}(\lambda)$.
\end{theorem}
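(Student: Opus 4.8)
The plan is to treat Algorithm (\ref{alg.pseudocode}) as a damped gradient-type ascent scheme and to run the textbook monotone-ascent argument, the only non-standard ingredient being the assembly of the parallel block-Newton directions. Throughout I would exploit that $\mathcal{L}_{\mbox{\tiny penPL}}$ is the sum of a concave pseudo-likelihood and the strictly convex penalty $f^{\mbox{\tiny pen}}$, so on the compact convex parameter space it is strictly concave and, being an exponential-family log-pseudo-likelihood, twice continuously differentiable. Hence it has a unique maximizer $\widehat{\mathbf{\Theta}}(\lambda)$ (already named in the statement) and its Hessian $\mathbf{H}=\partial^2\mathcal{L}_{\mbox{\tiny penPL}}/\partial\ttheta\,\partial\ttheta^\top$ satisfies uniform bounds $-M\mathbf{I}\preceq\mathbf{H}\preceq-m\mathbf{I}$ with $0<m\le M<\infty$ on the compact domain. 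Writing $S_j$ for the $0/1$ selection matrix with $\ttheta_j=S_j\ttheta$, Cauchy's interlacing theorem gives the same bounds for every block, $-M\mathbf{I}\preceq\mathbf{H}_j\preceq-m\mathbf{I}$, so each $-\mathbf{H}_j^{-1}$ has eigenvalues in $[1/M,1/m]$.

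The key lemma is that the aggregated step is an ascent direction. With block-Newton directions $d_j=-\mathbf{H}_j^{-1}g_j$, where $g=\partial\mathcal{L}_{\mbox{\tiny penPL}}/\partial\ttheta$ and $g_j=S_jg$, the combined update is $d=\sum_{j\in\mathcal{V}}S_j^\top d_j$. Then $g^\top d=\sum_j g_j^\top(-\mathbf{H}_j^{-1})g_j\ge(1/M)\sum_j\|g_j\|^2\ge(1/M)\|g\|^2$, the last step using $\sum_j S_j^\top S_j\succeq\mathbf{I}$ (every coordinate of $\ttheta$ lies in at least one block), so $d$ is a strict ascent direction whenever $g\ne0$. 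For the step length I also need the crude upper bound $\|d\|^2\le p\sum_j\|d_j\|^2\le(p/m^2)\sum_j\|g_j\|^2$, obtained by the triangle inequality and Cauchy--Schwarz over the $p$ summands; this is exactly where the factor $p$ enters.

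For the step size and the role of $\alpha\ge p$, I would apply the descent lemma to the $M$-smooth concave $\mathcal{L}_{\mbox{\tiny penPL}}$ along $d$ with step $1/\alpha$: $\mathcal{L}_{\mbox{\tiny penPL}}(\ttheta+\tfrac1\alpha d)\ge\mathcal{L}_{\mbox{\tiny penPL}}(\ttheta)+\tfrac1\alpha g^\top d-\tfrac{M}{2\alpha^2}\|d\|^2$. Substituting the two bounds, the linear term dominates once $\alpha\ge M^2p/(2m^2)$ --- i.e. for all sufficiently large $\alpha$, the stated threshold $\alpha\ge p$ capturing the $p$-dependence --- and then $\mathcal{L}_{\mbox{\tiny penPL}}(\ttheta^{(k+1)})-\mathcal{L}_{\mbox{\tiny penPL}}(\ttheta^{(k)})\ge\frac{1}{2\alpha M}\|g^{(k)}\|^2>0$ whenever $g^{(k)}\ne0$ (if the algorithm line-searches, its increase is at least this much). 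The same smallness of $1/\alpha$, together with the step-size safeguard in the algorithm, keeps the iterates inside the compact convex domain. The sequence $\{\mathcal{L}_{\mbox{\tiny penPL}}(\ttheta^{(k)})\}$ is then increasing and bounded above, hence convergent; telescoping gives $\sum_k\|g^{(k)}\|^2\le 2\alpha M\,[\mathcal{L}_{\mbox{\tiny penPL}}(\widehat{\mathbf{\Theta}})-\mathcal{L}_{\mbox{\tiny penPL}}(\ttheta^{(0)})]<\infty$, so $g^{(k)}\to0$. By compactness every subsequence has a convergent sub-subsequence whose limit is stationary by continuity of $g$, and strict concavity forces that limit to be the unique maximizer $\widehat{\mathbf{\Theta}}(\lambda)$; hence the whole sequence converges. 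Finite termination follows because, under the $\tau$-governed stopping rule, each non-terminal iteration raises $\mathcal{L}_{\mbox{\tiny penPL}}$ by at least a fixed $\delta(\tau,\alpha,M)>0$ (equivalently $\|\ttheta^{(k+1)}-\ttheta^{(k)}\|\to0$ must eventually fall below $\tau$), bounding the iteration count by $[\mathcal{L}_{\mbox{\tiny penPL}}(\widehat{\mathbf{\Theta}})-\mathcal{L}_{\mbox{\tiny penPL}}(\ttheta^{(0)})]/\delta$.

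I expect the genuine work to be twofold: (i) justifying the uniform curvature bounds $m,M$ and the interlacing for the block Hessians on the compact feasible set, and especially (ii) the feasibility bookkeeping, since a raw Newton-type step can leave the compact convex domain on which the bounds and the well-definedness of $P_{\mathbf{\Theta}}$ hold. The argument must therefore lean on the step-size safeguard (the multiplier $\alpha$, or a backtracking line search) to keep every iterate feasible while retaining the quadratic ascent lower bound; this is the main obstacle.
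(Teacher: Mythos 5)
Your proposal follows essentially the same route as the paper's proof: show the sum of the zero-padded block-Newton directions is an ascent direction for the strictly concave $\mathcal{L}_{\mbox{\tiny penPL}}$ (using uniform curvature bounds for the principal-submatrix Hessians $\mathbf{H}_j$ on the compact domain and the overlap bound $\sum_j S_j^{\top} S_j \succeq \mathbf{I}$), deduce a guaranteed per-iteration increase once the multiplier $\alpha$ exceeds a $p$-proportional curvature-dependent threshold --- which is precisely the role of the paper's ``sufficiently large $\alpha \geq p$'' and of the explicit lower bound on $\alpha$ developed in its Supplementary Material C --- and then obtain convergence to the unique maximizer via monotone ascent, compactness, and strict concavity, with finite termination because the stopping rule thresholds the gradient norm, which your telescoping bound drives below $\tau$ in at most $2\alpha M\,[\mathcal{L}_{\mbox{\tiny penPL}}(\widehat{\mathbf{\Theta}}(\lambda)) - \mathcal{L}_{\mbox{\tiny penPL}}(\ttheta^{(0)})]/\tau^2$ iterations. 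Two small corrections: the algorithm's stopping criterion is on $\| \partial \mathcal{L}_{\mbox{\tiny penPL}}/\partial\ttheta \|_2$, not on the step length $\|\ttheta^{(k+1)}-\ttheta^{(k)}\|$, and the feasibility issue you flag is resolved not by a backtracking safeguard (the algorithm has none) but by monotone ascent combined with coercivity of the strictly convex penalty, which confines all iterates to the compact initial superlevel set on which your bounds $m, M$ hold.
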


\begin{proof}
Refer to Supplementary Material C.
\end{proof}

% \vfill\null 
% \columnbreak
The presented Algorithm \ref{alg.pseudocode} balances computational complexity, convergence rate and optimal use of available information. The algorithm terminates after a finite number of iterations and one iteration, i.e. lines 3 to 9, has computational complexity $O(p^3)$ when run in parallel. Moreover, Algorithm \ref{alg.pseudocode} uses all available second-order information (the Hessian of $\mathcal{L}_{\mbox{\tiny penPL}}$) and its convergence rate is at least linear. However, the convergence rate is quadratic when the multiple updates for each parameter are identical.

% \newpage
\begin{minipage}[h]{\textwidth}
\begin{center}
\fbox{\parbox{14cm}{
% \removelatexerror
\begin{algorithm}[H]
\SetKwInOut{Input}{input}
\SetKwInOut{Output}{output}
\RestyleAlgo{boxed}
\LinesNumbered
\SetKwFor{inpar}{for}{do in parallel}{\text{ end synchronize}}
\Input{$n\times p$ data matrix $\mathbf{Y}$;
\\
$p$ exponential family members;
\\
initial parameter $\ttheta^{(0)}$;
\\
penalty parameter $\lambda_n \in \mathbb{R}_{>0}$;
\\
threshold $\tau \in \mathbb{R}_{> 0}$;
\\
step size $\alpha > 0$.
}
\Output{sequence $\{\ttheta^{(k)}\}_{k \geq 0}$.}
\vspace{0.25cm}
{\bf initialize} {$k=0$, ${err}_0 =2\tau$}.\\
\vspace{0.25cm}
\While{ ${err}_k > \tau$ }{
\BlankLine
\inpar{$j \in \mathcal{V}$}{
\BlankLine
calculate the gradient $\partial \mathcal{L}_{\mbox{{\tiny penPL}}} / \partial \ttheta_{j}$ and Hessian $\mathbf{H}_{j}$.
\\
\BlankLine
compute a single Newton-Raphson update of $\ttheta_j$.
\BlankLine
formulate the update as a $q$-dimensional vector $\tilde{\ttheta}_{j}$ by:
\begin{eqnarray*}
(\tilde{\ttheta}_j)_{\mathsfit{q}} & = &
\left\{
\begin{array}{cl}
(\ttheta_j)_{j'} & \mbox{for } \mathsfit{q} \in \mathcal{Q} \mbox{ s.t. } \mathsfit{q} = (j, j') \mbox{ or } \mathsfit{q} = (j', j),
\\
0 & \mbox{otherwise.}
\end{array}
\right.
\end{eqnarray*}
% \BlankLine
}
\BlankLine
define the parameter estimate $\hat{\ttheta}^{(k+1)} := \hat{\ttheta}^{(k)} + \frac{1}{\alpha}\sum_{j \in \mathcal{V}} \tilde{\ttheta}_j$.
\\
\BlankLine
assess error
${err}_k = \| \left. \partial \mathcal{L}_{\mbox{{\tiny penPL}}} / \partial \ttheta \right|_{\ttheta = \hat{\ttheta}^{k+1}} \|_2$ and $k = k+1$.\BlankLine
}\BlankLine
\vspace{0.25cm}
\caption{{\small Pseudocode of the parallel block coordinate Newton-Raphson algorithm for evaluation of the penalized pseudo-likelihood estimator.}} \label{alg.pseudocode}
\end{algorithm}
}
}
\end{center}
\end{minipage}
% \clearpage

\mbox{ } 
\\
\\
The pseudo-likelihood method has previously been reported to be computationally intensive with slow algorithms (\citealp{Chen2015}). For instance, the computational complexity of pseudo-likelihood maximization is $O(p^6)$ per iteration for a naive implementation of the Newton-Raphson algorithm. Comparable work therefore uses either the pseudo-likelihood or a node-wise regression. When maximizing the pseudo-loglikelihood, existing methods use a diagonal Hessian or an approximation thereof, or only first-order information (\citealp{Tibshirani2009}, \citealp{LeeHastie2013}). Such approaches achieve linear convergence at best and have a computational complexity of at least $O(np^2)$ per iteration as the gradient of the pseudo-loglikelihood must be evaluated. Alternatively, the computational complexity of node-wise regression methods is $O(p^4)$ per iteration for existing algorithms, which could be optimized to $O(p^3)$ with a parallel implementation. However, node-wise regression methods estimate each parameter twice and subsequently need to aggregate their node-wise estimates. This aggregated estimate does not exhibit quadratic convergence. Moreover, these node-wise estimates are potentially contradictory and their quality depends on the type of the variable (\citealp{Chen2015}). In summary, we expect Algorithm (\ref{alg.pseudocode}) to perform no worse than other pseudo-likelihood maximization approaches, since its computational complexity of $O(p^3)$ is comparable or better than existing methods and all available second-order information is used.

Finally, the condition on the multiplier $\alpha$ in Theorem \ref{thm.convergenceAlgorithm} may be relaxed (cf. Lemma \ref{supp_thm_non_conv_comb}), thereby appropriately increasing the step size of the parameter update and the convergence speed of Algorithm \ref{alg.pseudocode}.  

\begin{lemma} \label{supp_thm_non_conv_comb} \mbox{ } 
Let $\mathbf{Y}=\{\mathbf{Y}_1,...,\mathbf{Y}_n\}$ be $p$-variate draws from an exponential family distribution \newline $P_{\mathbf{\Theta}}(\mathbf{Y})\propto \exp[\mathbf{\Theta} \, T(\mathbf{Y}) + h(\mathbf{Y})]$. Let $\ttheta^{(0)}$ be any initial parameter unequal to the maximum ridge-penalized pseudo-likelihood estimator $\widehat{\mathbf{\Theta}}^{\mbox{{\tiny ridge}}}(\lambda)$. A single iteration of Algorithm (1) initiated with $\ttheta^{(0)}$ yields the block solutions $\{\ttheta_j \}_{j \in \mathcal{V} }$ (Line 5, Algorithm 1) and block-wise updates $\{ \tilde{\ttheta}_j \}_{j \in\mathcal{V}}$ (Line 6, Algorithm 1). Let $\alpha > 0$ and define $\ttheta^{(1)}$ as $\ttheta^{(1)} = \ttheta^{(0)} + \tfrac{1}{\alpha} \sum_{j \in \mathcal{V}} \tilde{\ttheta}_j$. Next, define the $p$-dimensional difference vectors $\{ \ddelta_{j} \}_{j \in \mathcal{V}}$ with elements:
\begin{eqnarray}
(\ddelta_{j})_{j'} & = &
\left\{
\begin{array}{rcr}
(\ttheta_{j'})_{j} - (\ttheta_{j})_{j'} & \mbox{ if } & j \neq j'
\\
-(\ttheta_{j})_{j'} &  \mbox{ if } & j=j',
\end{array}
\right.
\end{eqnarray}
for all $j,j' \in \mathcal{V}$. Let $L( \cdot; \ttheta^{(0)})$ be the second-order Taylor approximation of $\mathcal{L}_{\mbox{{\tiny penPL}}}$ at $\ttheta^{(0)}$. Then $L(\ttheta^{(1)}; \ttheta^{(0)}) > L(\ttheta^{(0)}; \ttheta^{(0)})$ if,
\begin{eqnarray} \label{supp_max_thm_non_conv_condition}
\alpha & \geq & \alpha_{min} = 3 + \frac{3}{2} \cdot \frac{\sum_{j \in \mathcal{V}} \ddelta_j^\top \mathbf{H}_{j} \ddelta_j}{\sum_{j \in \mathcal{V}} \ttheta_j^\top \mathbf{H}_{j} \ttheta_j},
\end{eqnarray}
where $\mathbf{H}_j$ is the $j$-th block Hessian matrix.
\end{lemma}

\begin{proof}
Refer to Supplementary Material C.
\end{proof}

\noindent
Lemma \ref{supp_thm_non_conv_comb} presents a lower bound $\alpha_{min} > 3$ on $\alpha$ which warrants, when $\alpha > \alpha_{min}$, an increase of the penalized pseudo-likelihood $\mathcal{L}_{\mbox{{\tiny penPL}}}(\mathbf{\Theta},\mathbf{Y}_1,\ldots,\mathbf{Y}_n)$ at each iteration of Algorithm \ref{alg.pseudocode}. However, this lower bound is computationally demanding to evaluate as it depends on $\ttheta_j$ and $\mathbf{H}_j$. Therefore, this result mainly motivates the use of an $\alpha$ smaller than $p$. Then, whenever the algorithm does not decrease the error $\| \left. \partial \mathcal{L}_{\mbox{{\tiny penPL}}}/ \partial \ttheta \right|_{\ttheta=\hat{\ttheta}^{(k)}} \|_2$, the employed $\alpha$ is reset to (say) twice its value. Eventually, the multiplier $\alpha$ then becomes sufficiently large to ensure an improvement of the loss function and, thus, guarantees the convergence of Algorithm \ref{alg.pseudocode}. In practice, we implemented and used $\alpha_{min}$ throughout as it significantly speeds up the convergence of Algorithm \ref{alg.pseudocode}.

% \newpage

\subsection{Implementation}
\subsubsection{Algorithm}
Algorithm \ref{alg.pseudocode} was implemented in C++ using the OpenMP API that supports multithreading with a shared memory. For convenience of the user, the algorithm was wrapped in an R-package as extension for the R statistical computing software, together with some extensions such as $k$-fold cross-validation and a Gibbs sampler to draw samples from the pairwise MRF distribution. The package will be made publicly available on GitHub. The boundary constraints on the parameter space (e.g. those that ensure well-definedness for the GLM family) are implemented using an additional convex and twice-differentiable penalty whenever one of the boundary constraints is violated.

\subsubsection{Cross-validation}
The penalty parameter $\lambda$ of the ridge pseudo-likelihood estimator is selected using $k$-fold cross-validation. This amounts to dividing the $n$ samples over $k$ exhaustive and mutually exclusive groups. The samples from all-but-one of these groups are used to compute the estimator for a given choice of the penalty parameter. The performance of the obtained estimator is evaluated on the samples of the left-out group. This process is repeated $k$ times, leaving each group out once, but using the same penalty parameter value throughout. The resulting $k$ performances are averaged and this average is the estimated performance of the estimator for the employed penalty parameter value. The performance is assessed for a grid of penalty parameters usually spanning multiple orders of magnitude (e.g. $\lambda\in[10^{-10},10^2]$). The value that yields the best estimated performance is considered optimal and used to obtain the final, optimal estimator. In the remainder the performance is evaluated with the mean squared prediction error (MSPE) and $k=10$ unless stated otherwise. The choice for $k=10$ follows the recommendation provided in \cite{Friedman2001}, where it is suggested to be reasonable compromise between the resulting bias and variance of the cross-validated prediction error.

The simpliciy and generality of cross-valdation makes it an obvious choice for penalty parameter selection. Moreover, as cross-validation chooses a model with good prediction performance, it is also a natural choice for ridge-type estimators that do not induce sparsity. Alternatively, one may employ an information criterion, that balances model fit and  complexity, for penalty parameter selection. The extended Bayesian information criterion (eBIC) has been proposed for model selection with high-dimensional data \citep{chen2008extended}. \cite{gao2010composite} generalized the eBIC for the sparse maximum pseudo-likelihood estimators. Here the application of eBIC for penalty parameter selection is currently hampered by the lack of a proper definition of model complexity for non-sparse ridge-type estimators. But when available, it may be a promising computationally less demanding alternative to cross-validation.

\subsubsection{Sparsification}
The maximum ridge pseudo-likelihood estimate of $\mathbf{\Theta}$ does not contain any zero's. Zero's can be inferred by a post-estimation sparsification procedure. Hereto we propose to use the off-the-shelf empirical Bayes methodology of \cite{efron2004}. It assumes that absolute value of the off-diagonal elements of the standardized version of $\widehat{\mathbf{\Theta}}$ are i.i.d. following a two-component mixture: $f(\cdot) = \pi_0 f_0(\cdot) + (1-\pi_0) f_1(\cdot)$ with mixing proportion $\pi_0$ and components $f_0(\cdot)$ and $f_1(\cdot)$. The first component, $f_0(\cdot)$, represents the distribution of the off-diagonal elements of the standardized $\widehat{\mathbf{\Theta}}$ corresponding to edges absent in the underlying conditional independence graph, while the second component, $f_1(\cdot)$, relates to its present edges. The parameter $\pi_0$, and densities $f(\cdot)$ and $f_0(\cdot)$ are all estimated, employing some convenient assumptions, from the histogram of the off-diagonal elements of the standardized version of $\widehat{\mathbf{\Theta}}$. With these estimates at hand the posterior probability of an edge in the underlying graph given the observed element of the standarized $\widehat{\mathbf{\Theta}}$, i.e. $P((j, j') \in \mathcal{E} \, | \, (\tilde{\mathbf{\Theta}}_{j,j'})$, is calculated. This probability can be endowed with a local false discovery rate interpretation (see \citealp{efron2004}). This probability is used as the basis for sparsification. If large, an edge is inferred. If small, none is concluded.

\subsubsection{Sampling}\label{sect.sampling}
To draw data from the full pairwise MRF distribution may be difficult. But as the conditional distributions of each variate given the others are known explicitly and of relatively simple functional form, a Gibbs sampler to draw samples from the pairwise MRF distribution is easily constructed. It requires sampling from univariate exponential family members. Details of the resulting Gibbs sampler are immediate from its pseudo-code given in Supplementary Material D. In the remainder this algorithm is applied with a burn-in period of length $5.000$ and a thinning factor $1/500$, such that samples are selected from the chain $500$ iterations apart ensuring independence among samples (\citealp{Chen2015}).

\section{Simulations}
We evaluate the performance of the parallel block coordinate Newton-Raphson Algorithm \ref{alg.pseudocode} for the numerical evaluation of the maximum penalized pseudo-likelihood estimator together with the quality of the resulting estimator in a numerical study with synthetic data. Throughout we use the convex and twice differentiable ridge penalty $\|\mathbf{\Theta}\|_F^2$. Hence, it is the quality of the ridge pseudo-likelihood estimator of $\mathbf{\Theta}$ that is studied. The diagonal of the parameter matrix $\mathbf{\Theta}$ is left unpenalized (as recommended by \citealp{Tibshirani2009}). Throughout this section, the cross-validated ridge pseudo-likelihood estimator $\widehat{\mathbf{\Theta}}_n(\lambda_{\mbox{{\tiny opt}}})$ of $\mathbf{\Theta}$ is learned with Algorithm \ref{alg.pseudocode} using threshold $\tau = 10^{-10}$ and multiplier $\alpha= \alpha_{min}$ unless stated otherwise.

\subsection{Performance illustration}
We illustrate the performance of the algorithm and the estimator within a simulation assuming a lattice graph $\mathcal{G} = (\mathcal{V},\mathcal{E})$, thus following \citealp{yang2014mixed}, \citealp{LeeHastie2013}, and \citealp{Chen2015}. The chosen graph's layout represents the most general setting encompassed by the outlined theory in which each GLM family member is present with an equal number of variates (Figure \ref{fig.lattice_graph}a). The interactions obey the parameter restrictions for well-definedness of the pairwise MRF distribution. The resulting lattice graph for $p=16$ nodes has $|\mathcal{E}|=36$ edges. The corresponding pairwise MRF distribution has $136$ unique parameters and allows for $120$ edges. Hence, the constructed network is dense in the sense that it contains $30\%$ of all possible edges. Consequently, the nodes have an average degree of $4.5$, while correct graphical model selection is no longer guaranteed (asymptotically) when the maximum vertex degree is larger than $\sqrt{p/\log (p)} = \sqrt{16/\log (16)} = 2.4$ (\citealp{das2012learning}). The employed lattice graph thus represents a setting where previous work on (sparse) graphical models with data of mixed types fails when the sample size is small, relative to the number of parameters. To ensure the resulting pairwise MRF distribution $P_{\mathbf{\Theta}}(\mathbf{Y})$ adheres to the described lattice graph $\mathcal{G}$, we choose its parameter $\mathbf{\Theta}$ as follows:
\begin{eqnarray*}
\mathbf{\Theta}_{j, j'} & = & \left\{ \begin{array}{cl}
-0.2 & j, j' \in\mathcal{V} \mbox{ such that } j \neq j' \mbox{ and }  (j,j')\in\mathcal{E},
\\
-0.2 & j, j' \in\mathcal{V}  \mbox{ such that } j = j' \mbox{ and } Y_{j}  \mbox{ follows}
\\ 
& \mbox{either a Bernoulli or an exponential},
\\
2 & j, j' \in\mathcal{V}  \mbox{ such that } j = j' \mbox{ and } Y_{j} \mbox{ follows}
\\ 
& \mbox{either a Gaussian or a Poisson},
\\
0 & \mbox{Otherwise.}
\end{array}
\right.
\end{eqnarray*}
This parameter choice ensures the pairwise MRF distribution $P_{\mathbf{\Theta}}(\mathbf{Y})$ is well-defined and all edges share the same edge weight (as captured by the nonzero off-diagonal elements of $\mathbf{\Theta}$). Moreover, the variance of all node-conditional Gaussian variables was fixed at $\sigma^2=1$. With the distribution fully specified data are generated with the Gibbs sampler (described in Supplementary Material D) using the node-conditional distributions derived from $P_{\mathbf{\Theta}}(\mathbf{Y})$.

% \newpage 
\begin{figure*}[h!]
\begin{center}
\centering
\begin{tabular}{c}
{\includegraphics[angle=0, scale=0.8,clip,trim=35 415 50 35]{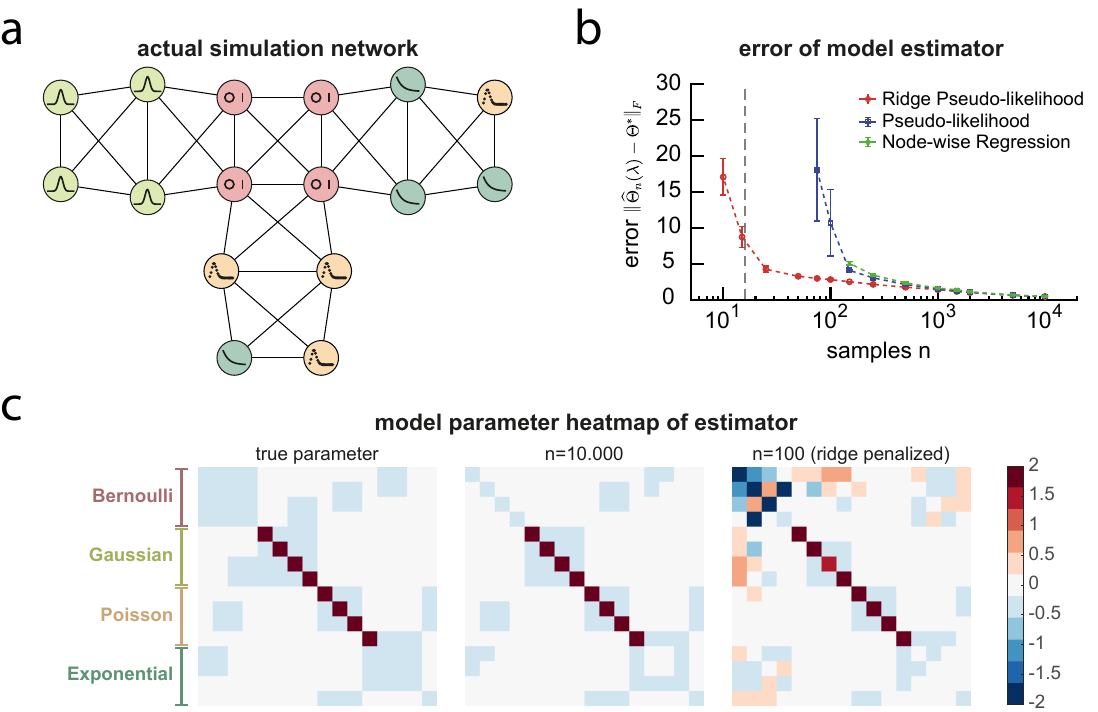}}
\end{tabular}
\end{center}
\vspace{-0.55cm}
\caption{\textbf{a,} The synthetic simulation lattice graph $\mathcal{G}=(\mathcal{V},\mathcal{E})$. Nodes with variates distributed in accordance with the GLM family members Bernoulli, Gaussian, Poisson and exponential are represented by a pictogram representing (the shape of) their distribution. \textbf{b,} Scaling of the estimator errors with the sample size. Included are the cross-validated ridge pseudo-likelihood (red), its unpenalized counterpart (blue) and the `averaged' node-wise regression coefficients (green). The number of replicates is inversely proportional to the sample size ($3$ replicates for $n=10^4$, and subsequently $6, 15, 18, 30, 60, 120, 180, 200, \ldots$ replicates). Error bars denote a $95\%$ confidence interval for the mean. \textbf{c,} Heatmaps of the true parameter (left) and the ridge pseudo-likelihood estimators of the lattice graph for $n=10.000$ (middle) and $n=100$ samples (right). 
}
\label{fig.lattice_graph}
\end{figure*}

% \newpage 
\begin{figure*}[h!]
\begin{center}
\centering
\begin{tabular}{c}
{\includegraphics[angle=0, scale=0.8,clip,trim=40 360 40 40]{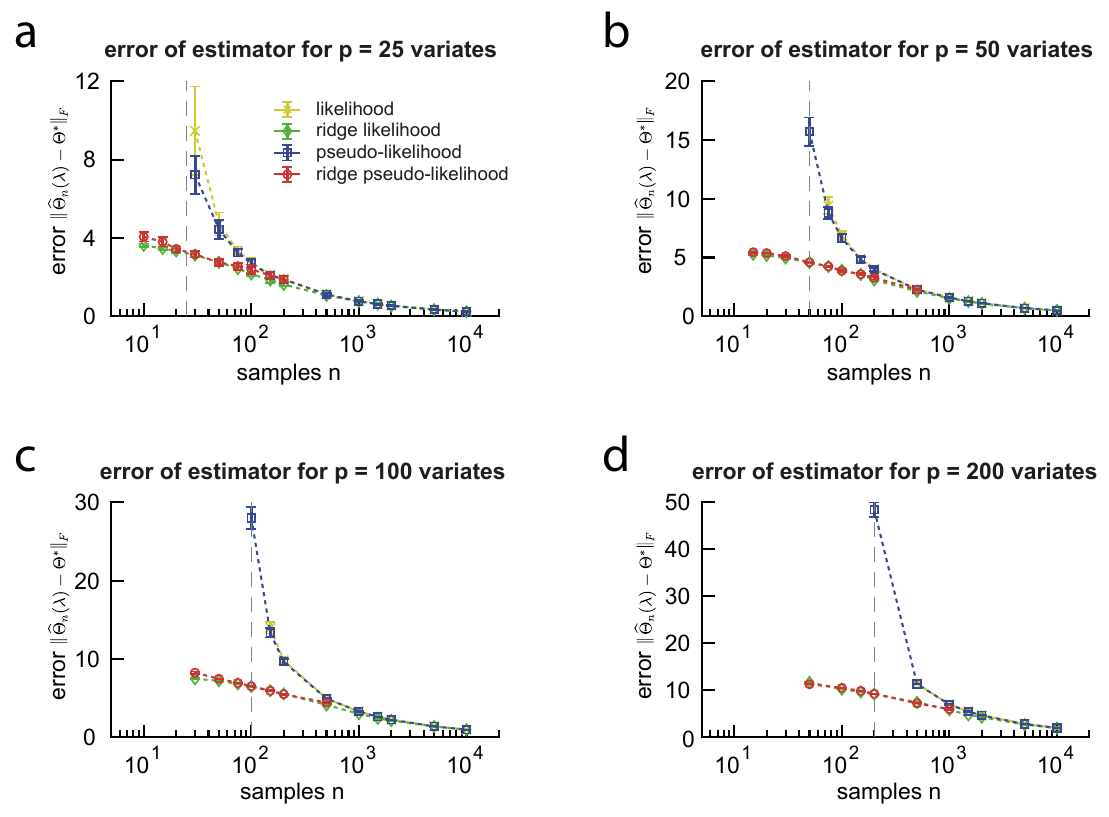}}
\end{tabular}
\end{center}
\vspace{-0.55cm}
\caption{Scaling of the precision matrix estimator error with the sample size. Included are the ridge pseudo-likelihood (red), its unpenalized counterpart (blue), the ridge likelihood (green) and its unpenalized counterpart (yellow). The number of replicates are $10$ for $n=10^4$, and subsequently $20$ for $n < 10^4$. Error bars denote the standard deviation. The number of variates of the model are \textbf{a,} $p=25$, \textbf{b,} $p=50$, \textbf{c,} $p=100$ and \textbf{d,} $p=200$.
}
\label{fig.simuResults}
\end{figure*}

% \newpage
% \noindent
From the generated data drawn from the `lattice graph' distribution we calculate the cross-validated ridge pseudo-likelihood estimator $\widehat{\mathbf{\Theta}}_n(\lambda_{\mbox{{\tiny opt}}})$ of $\mathbf{\Theta}$ using Algorithm \ref{alg.pseudocode}. Next, we evaluate the performance of Algorithm \ref{alg.pseudocode} on data for $n \in [10, 10^4]$. To this end, we compare the error of the cross-validated ridge pseudo-likelihood estimator to its unpenalized counterpart and the averaged node-wise regression coefficients as baseline (whenever the sample size allowed for the evaluation of the latter two). Here the error is defined as the Frobenius norm of the difference between the parameter and its estimate  $\| \widehat{\mathbf{\Theta}}(\lambda_{\mbox{{\tiny opt}}})-\mathbf{\Theta}\|_F$, and analogous for the other two. The error of each estimator is determined as function of the sample size (Figure \ref{fig.lattice_graph}b). The error of the ridge pseudo-likelihood estimator $\widehat{\mathbf{\Theta}}(\lambda_{\mbox{{\tiny opt}}})$ decreases slowly with the sample size $n$ in the low-dimensional regime as expected, while a sharp increase of its error of is observed in a high-dimensional setting. In the low-dimensional regime the error of the ridge pseudo-likelihood is generally on a par with its unpenalized counterpart and the baseline node-wise regression. More refined, both the maximum ridge and unpenalized pseudo-likelihood estimator outperform the baseline averaged node-wise regression for all sample sizes. A full information and simultaneous parameter estimation approaches are thus preferable. Finally, the proposed ridge pseudo-likelihood estimator clearly shows better performance in the sample domain of (say) $n<150$. Hence, regularization aids (in the sense of error minimization) when the dimension $p$ approaches or exceeds the sample size $n$. Finally, representative examples of resulting estimates are visualized by means of a heatmap with the true parameter as reference (Figure \ref{fig.lattice_graph}c). Strikingly, the ridge pseudo-likelihood estimator has most deviations from the true parameter for 'Bernoulli nodes', predominantly amongst themselves. This hints at the need for a larger sample size in order to adequately estimate the parameters related to interactions among Bernoulli variates.

\subsection{Comparison}
In an effort to compare the performance of the proposed ridge pseudo-likelihood estimator, we consider the Gaussian graphical model and compare with the ridge precision estimator (\citealp{van2016ridge}). Assuming all variates being jointly normal, $\mathbf{Y} \sim \mathcal{N}(\mathbf{0}_p, \mathbf{\Omega}^{-1})$, the latter estimates $\mathbf{\Omega}$ through ridge penalized likelihood maximization. The ridge pseudo-likelihood estimator too estimates $\mathbf{\Omega}$, but does so in a limited information approach. Here we compare the quality of these full and limited information approaches \textit{in silico}. To this end define a three-banded precision matrix $\mathbf{\Omega}$ with a unit diagonal, $(\mathbf{\Omega}_{j,j+1} = 0.5 = (\mathbf{\Omega})_{j+1,j}$ for $j=1, \ldots, p-1$, $(\mathbf{\Omega}_{j,j+2} = 0.2 = (\mathbf{\Omega})_{j+2,j}$ for $j=1, \ldots, p-2$, $(\mathbf{\Omega}_{j,j+3} = 0.1 = (\mathbf{\Omega})_{j+3,j}$ for $j=1, \ldots, p-4$, and all other entries equal to zero. Here we let the number of variates $p$ range from $p=25$ to $p=200$ to test the performance of the proposed estimator for its intended use in the context of a large number of variates. We draw a sample of various sizes $n$ from the thus defined multivariate normal $\mathcal{N}(\mathbf{0}_p, \mathbf{\Omega}^{-1})$. From these data both the likelihood and pseudo-likelihood estimators are evaluated.

\noindent
We compare the performance of the two ridge precision estimators by means of the error, defined as $\lvert\lvert\widehat{\mathbf{\Omega}}(\lambda_{\mbox{\tiny{opt}}})-\mathbf{\Omega}\rvert\rvert_{F}$, to their unpenalized analogues (Figure \ref{fig.simuResults}).  First we consider a low number of variates $p=25$ (Figure \ref{fig.simuResults}a). In the low-dimensional regime the errors of all estimators are very close and decrease slowly with the sample size $n$ as expected. In the high-dimensional regime ($n < 100$), expectedly, the penalized estimators clearly outperform their unpenalized counterparts as can be witnessed from their diverging error when $n$ approaches $p$. Restricted to the penalized estimators, the ridge likelihood estimator appears to outperform its ridge pseudo-likelihood counterpart slightly. This is probably to the full information usage of the likelihood. More importantly, the performance (as measured by the error of the precision matrix) of the maximum penalized pseudo-likelihood estimator is comparable to that of the maximum penalized likelihood estimator. This corroborates the results of previous simulation studies into the maximum (lasso) penalized pseudo-likelihood estimator  (\citealp{LeeHastie2013}, \citealp{Tibshirani2009}). (The error of the estimators in the high-dimensional regime is further studied in Supplementary Material E, figure 1). With the application of large data sets and parallel computing in mind, we consider the performance of the maximum penalized pseudo-likelihood estimator for a higher number of variates up to $p=200$ next (Figure \ref{fig.simuResults}b-d). Generally, while an increase of the dimension $p$ increases the error of the estimators, qualitatively their relative behavior is largely unchanged. More specifically, for $p=50$, $p=100$ and $p=200$ the errors of all estimators are very close in the low-dimensional regime, similarly to the `$p=25$'-case. In the high-dimensional regime the errors of the maximum penalized pseudo-likelihood estimator are again comparable to that of the maximum penalized likelihood estimator.

\begin{figure*}[b!]
\begin{center}
\centering
\begin{tabular}{c}
{\includegraphics[angle=0, scale=0.8,clip,trim=50 360 50 50]{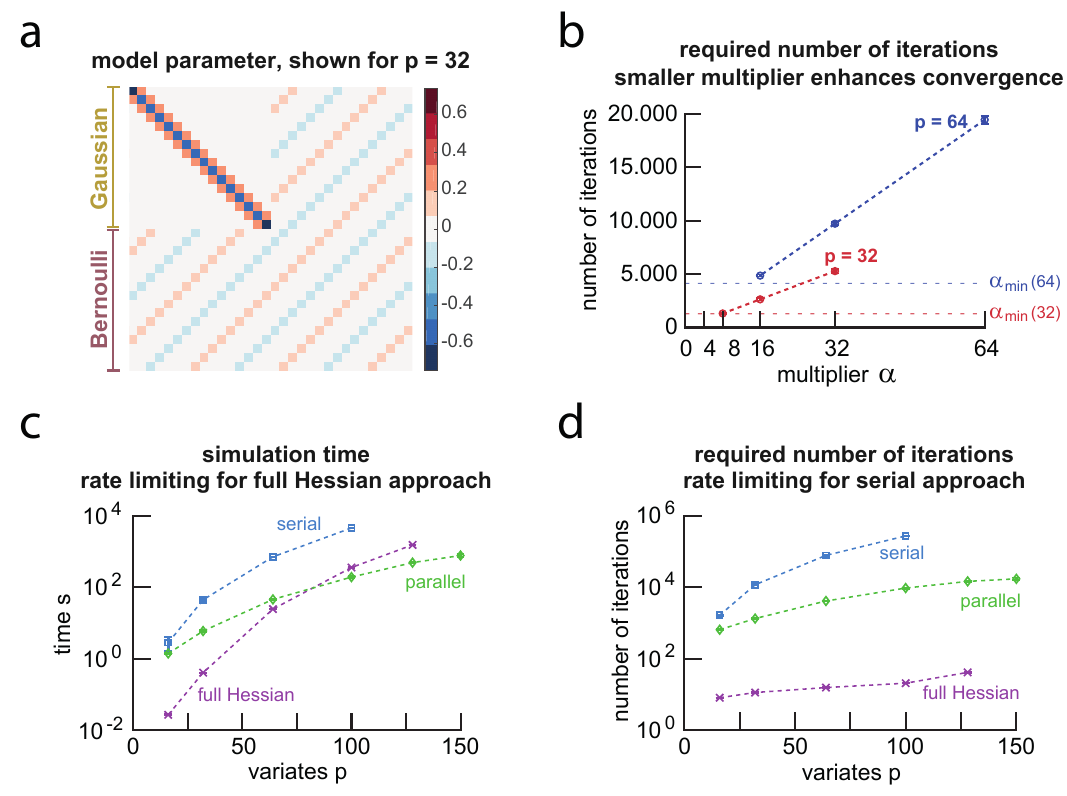}}
\end{tabular}
\end{center}
\vspace{-0.5cm}
\caption{Details and results of the benchmark study of the runtime and required number of iterations for convergence of Algorithm \ref{alg.pseudocode}. \textbf{a,} Heatmap of the Bernoulli-Gaussian MRF distribution parameter (for $p=32$, other number of variates are defined analogous). \textbf{b,} The required number of iterations plotted against the multiplier $\alpha$ for $p=32$ (red) and $p=64$ (blue) variates. Lightblue and lightred dotted lines denote the number of iterations required when using the multiplier $\alpha_{min}$. \textbf{c,} and \textbf{d,} The simulation time (\textbf{c}) and required number of iterations (\textbf{d}) for Algorithm \ref{alg.pseudocode} to converge with treshold $\tau=10^{-10}$ plotted against the number of variates ($p=16, 32, 64, 100, 128$ and $150$). Included are the naive Newton-Raphson approach (purple) and the serial (blue) and parallel (green) block coordinate approaches. All simulations were run on a standard laptop used for scientific work using four processing cores for the parallel algorithm. The inverse Hessian matrices was computed every $k=p$ iterations to reduce computation time for each method. Each condition has $5$-replicates and error bars denote the standard deviation. 
}
\label{fig.benchmark}
\end{figure*}

\subsection{Speed-up and benchmark}
Here we pursue to speed up Algorithm \ref{alg.pseudocode} by further reducing its computational complexity. First, the block-coordinate Newton-Raphson is modified to a block-coordinate quasi-Newton approach by using a chord method that computes inverse of the block-wise Hessian matrices $\big\{\mathbf{H}_{j}\big\}_{j\leq p}$ only every $k_0=p$ iterations of the algorithm. This vastly reduces the computational complexity of the algorithm by alleviating the effect of the rate-limiting step.

We benchmark the proposed algorithm by studying its runtime and the required number of iterations. This is done for a distribution with $p$ variates of the binary (Bernoulli) and continuous (Gaussian) type. The two types are equally represented among the $p$ variates. The parameter $\mathbf{\Theta}$ of this distribution is chosen such that it satisfies the parameter restrictions for well-definedness of the pairwise MRF distribution.  The conditional precision matrix of the Gaussian variates is three-banded as in the previous comparison study. Each Bernoulli variate has an interaction with every fifth other variate, and the corresponding interaction parameters are set equal to $\pm 0.1$, in alternating fashion. The resulting MRF distribution parameter $\mathbf{\Theta}$ has about $20\%$ non-zero elements and is visualized by a heatmap in Figure \ref{fig.benchmark}a. Data are drawn from the mixed Bernoulli-Gaussian distribution with the Gibbs sampler. Each dataset contains $n=1.000$ samples with a dimension ranging from $p=16$ to $p=150$.

The effect of the choice of the multiplier $\alpha$ on the required number of iterations for convergence in the evaluation of the pseudo-likelihood estimator $\widehat{\mathbf{\Theta}}_n$ on the data from the Bernoulli-Gaussian distribution with $p=32$ and $p=64$ variates is 
shown in Figure \ref{fig.benchmark}b. A multiplier $\alpha = p$ corresponds to the naive approach of averaging block coordinate updates (a convex combination). Reducing $\alpha$ -- and thereby increasing the step-size -- improves the rate of convergence of Algorithm \ref{alg.pseudocode}, especially for large $p$. In fact, for fixed $p$ the convergence rate of Algorithm \ref{alg.pseudocode} scales nearly linearly with the choice for $\alpha$. However, the algorithm diverges for too small $\alpha$. This is prevented by choosing $\alpha \geq \alpha_{min}$ with the proposed $\alpha_{min}$ from Lemma \ref{supp_thm_non_conv_comb} to ensure convergence and the optimization of the required number of iterations (Figure \ref{fig.benchmark}).

The runtime and required number of iterations for termination of Algorithm \ref{alg.pseudocode} with the implemented multiplier $\alpha_{min}$ are presented in Figure \ref{fig.benchmark}c-d. The figure contains the results for the pseudo-likelihood estimator $\widehat{\mathbf{\Theta}}_n$ computed in three ways: \textit{i)} using the naive Newton-Raphson algorithm, the \textit{ii)} serial and \textit{iii)} parallel block coordinate approach. The naive Newton-Raphson approach computes and inverts the full Hessian matrix. The serial block coordinate approach updates $\widehat{\mathbf{\Theta}}_n$ for one block $j<p$ per iteration and then inverts only the $j$-th Hessian $\mathbf{H}_{j}$ matrix. The parallel approach is Algorithm \ref{alg.pseudocode} and computes and inverts all $\big\{\mathbf{H}_{j}\big\}_{j\leq p}$ in parallel. For a fair comparison each approach inverts their respective Hessian matrices only every $k_0=p$-th step. Note that using the naive Newton-Raphson that inverts the full Hessian at each step is computationally too intensive, while a diagonal Hessian requires too many iterations for convergence (for a comparison see Supplementary Material E, Figure 2). The results reveal that for small $p$ the naive Newton-Raphson approach outperforms the block coordinate approaches. This is due the quadratic convergence and the computational tractability of the inversion of the Hessian matrix. For large $p\geq100$ the parallel block coordinate approach is fastest as its runtime increases slowest with $p$. In particualar, the required number of iterations to termination of the serial block coordinate approach increases very fast with $p$, while the computational complexity of inversion the full Hessian matrix quickly becomes prohibitly large. To appreciate the computational efficiency of Algorithm \ref{alg.pseudocode} further note that one iteration of the naive Newton-Raphson implementation has computational complexity $O(p^6)$ compared to $O(p^3)$ of the parallel Algorithm \ref{alg.pseudocode}. This permits the latter $O(p^3)$ iterations before exceeding the computational complexity of one iteration of the naive implementation. Hence, the full Hessian approach will always be outperformed by its alternatives due to the inversion of the full Hessian matrix, irrespective of its implementation. The presented algorithm was found to always terminate within $O(p^3)$ iterations.

In summary, the parallel block coordinate approach balances the required number of iterations and the time required per iteration for all approaches, thereby optimizing the total runtime. Our proposed algorithm can thus be considered as exploiting the best of both worlds: using full second-order information, while minimizing computational complexity.

% \clearpage
% \newpage
\section{Application} \label{sect.application}
The proposed methodology is illustrated by means of learning the pairwise MRF distribution of (part of the) cellular regulatory network from omics data. The illustration uses the publicly available non-silent somatic mutation and gene expression data from the invasive breast carcinoma study of the Cancer Genome Atlas (TCGA) project (\citealp{TCGA2012}). The gene expression data comprises the RNA sequencing profiles of $445$ patients with invasive breast carcinomas (BRCA) and are obtained via the {\tt brcadat} dataset included in the R-package {\tt XMRF} (\citealp{XMRF2016}). The thus obtained data set includes the expression levels (normalized mRNA read counts) of $353$ genes listed in the Catalogue of Somatic Mutations in Cancer (COSMIC) (\citealp{Sanger2017}). Subsequently, the RNA sequencing data are preprocessed as described in \citealp{allen2013local} using the {\tt processSeq} function from the {\tt XMRF} package. This results in expression levels that -- presumably -- can be modeled with a Poisson distribution. Then, the top $15\%$ genes with largest variance in expression levels across patients are retained. Next, the mutation data of $977$ BRCA patients are obtained from the TCGA project and filtered with the {\tt getTCGA}-function from the R-package {\tt TCGA2STAT} (\citealp{TCGA2STAT2016}). The binary data indicate the presence of a mutation in the coding region of a gene. Furthermore, only mutation data of genes with mutations present in at least $5\%$ of the patients are included. Finally, the mutation and expression data sets are merged at the gene level and for common patients using the {\tt OMICSBind} function from the {\tt TCGA2STAT}-package. The resulting final data matrix contains $p=63$ variates comprising the non-silent somatic mutations of $11$ genes (binary) and the expression level of $52$ genes (counts) from $n=433$ BRCA patients. The genes \emph{CDH$\it 1$} and \emph{GATA$\it 3$} have both mutation and expression data measured.

The pairwise MRF distribution with node-conditional Bernoulli and Poisson distributions for mutation and expression variates, respectively, was fitted to the data described above using Algorithm \ref{alg.pseudocode} to find the proposed ridge pseudo-likelihood estimator. The penalty parameter is selected with $5$-fold cross-validation using a grid search over a range of penalty parameters to minimize the mean square prediction error.

\begin{figure*}[b!]
\begin{center}
\centering
\begin{tabular}{c}
{\includegraphics[angle=0, scale=0.75,clip,trim=30 450 30 30]{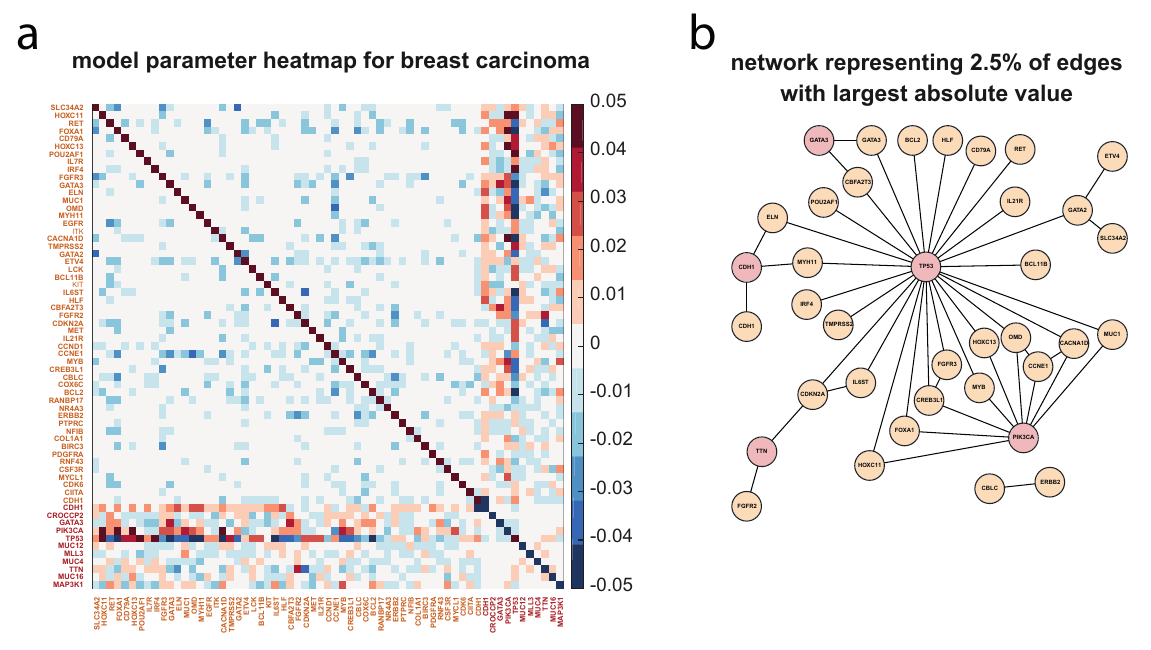}}
\end{tabular}
\end{center}
\vspace{-0.5cm}
\caption{\textbf{a,} Heatmap of the ridge pseudo-likelihood estimator of the invasive breast carcinoma gene network. Non-silent somatic mutations in genes are labeled red, gene expression levels are labeled yellow. Positive interactions between genes are shaded red, negative interactions are shaded blue. Algorithm (\ref{alg.pseudocode}) was run using threshold $\tau = 10^{-3}$ and  terminated within $34.000$ iterations for each run. \textbf{b,} Network visualization of the ridge pseudo-likelihood estimator of the BRCA gene network. Shown are the connected components of the network with edges whose absolute edge weight is among the largest $2.5\%$ in absolute sense (disconnected nodes are omitted). Nodes in the network are labeled with the gene name, where either non-silent somatic mutations (binary, red) or gene expression levels (counts, yellow) were measured.
}
\label{fig.brca_heatmap}
\end{figure*}

The resulting estimate is visualized by means of a heatmap (Figure \ref{fig.brca_heatmap}a). Interaction parameters between genes' expression levels are always negative, in line with the constraints for well-definedness. Moreover, the majority of strongest interactions (in absolute sense) relate gene expression levels to somatic mutations. Finally, we summarized the ridge pseudo-likelihood estimator as a network for interpretation purposes (Figure \ref{fig.brca_heatmap}b). The network includes only the $2.5\%$ edges with the largest (in the absolute sense) weights (this simple selection procedure only facilitates the network illustration and is not aimed at any form of type I error control). These plotted edges include two types of interactions, interactions among genes' expression levels and those between one gene's expression levels and the somatic mutations of other genes. The majority of these edges involve the non-silent somatic mutation in \emph{TP$\it 53$}, which is a well-known tumor suppressor gene (\citealp{Levine1991}). These edges relate a mutation in \emph{TP$\it 53$} to a change in the expression level of genes that have been causally implicated with cancer (\citealp{Sanger2017}), thereby confirming the role of \emph{TP$\it 53$} as guardian of the genome (\citealp{Lane1992}). Another biomarker with a large node degree is a non-silent mutation in \emph{PIK$\it 3$CA}, a known oncogene that is often mutated in breast cancers (\citealp{Cizkova2012}). Otherwise noteworthy are genes \emph{CDH$\it 1$} and \emph{GATA$\it 3$}, the only ones included with both mutation and expression data. The molecular levels of these two genes are connected by an edge in the displayed network, which indicates that their expression levels are related to a mutation in their DNA template. Finally, the network contains no edges between the binary mutation variates, which may have been expected as somatic mutations may co-occur but are generally believed to be unrelated.

\section{Conclusion}
We presented methodology for the estimation of multivariate exponential family distributions. As special case of interest, the employed class of distributions encompasses the pairwise Markov random field that describes stochastic relations among variates of various types.

The model parameters are estimated by means of penalized pseudo-likelihood maximization to account for collinearity and an excess of variates (relative to the sample size). This estimator was shown to be consistent under mild conditions. Our algorithm allows for efficienct computation on multi-core systems and accommodates for a large number of variates. The algorithm was shown to converge and terminate. The performance of the penalized estimator was compared to related ones, and the proposed algorithm was benchmarked in a simulation study. The latter resulted in several improvements to enhance computational efficiency and speed of convergence. Finally, our methodology was demonstrated with an application to an integrative omics study using data from various molecular levels (and types), which yielded biological sensible results.

Envisioned extensions of the presented ridge pseudo-likelihood estimator allow -- among others -- for variate type-wise penalization. Technically, this is a minor modification of the algorithm but brings about the demand for an efficient penalty parameter selection procedure. Furthermore, when quantitative prior information of the parameter is available it may be of interest to accommodate shrinkage to nonzero values.

Foreseeing a world with highly parallelized workloads, our algorithm provides a first step towards a theoretical framework that allows for efficient parallel evaluation of (high-dimensional) estimators. Usually and rightfully most effort concentrates on the mathematical optimization of the computational aspects of an algorithm. Once that has reached its limits, parallelization may push further. This amounts to simultaneous estimation of parts of the parameter followed by careful -- to ensure convergence -- recombination to construct a fully updated parameter estimate. Such parallel algorithms may bring about a considerable computational gain. For example, in the presented case this gain was exploited to incorporate full second-order information without inferior computational complexity compared to existing algorithms.

\bibliographystyle{plainnat}
% \bibliography{/media/wessel/FREECOM HDD1/Wessel/Research/Articles/ridgeMRF/SAC/TEX/ridgeMRF.bib}

\bibliography{ridgeMRF.bib}

\end{document}